\newtheorem{theorem}{Theorem}
\newtheorem{lemma}{Lemma}
\journal{Journal of Systems Architecture}
\begin{document}

\begin{frontmatter}

\title{Efficient Adaptive Bandwidth Allocation for Deadline-Aware Online Admission Control in Time-Sensitive Networking}

\author{Sifan Yu}
\ead{yusifan@buaa.edu.cn}
\author{Feng He}
\ead{robinleo@buaa.edu.cn}
\author{Anlan Xie}
\ead{xieanlan2000@buaa.edu.cn}
\author{Luxi Zhao\corref{cor1}}
\ead{zhaoluxi@buaa.edu.cn}
\cortext[cor1]{Corresponding author}

\address{School of Electronic and Information Engineering, Beihang University, China}

\begin{abstract}
With the growing demand for dynamic real-time applications, online admission control for time-critical event-triggered (ET) traffic in Time-Sensitive Networking (TSN) has become a critical challenge. The main issue lies in dynamically allocating bandwidth with real-time guarantees in response to traffic changes while also meeting the requirements for rapid response, scalability, and high resource utilization in online scenarios. To address this challenge, we propose an online admission control method for ET traffic based on the TSN/ATS+CBS (asynchronous traffic shaper and credit-based shaper) architecture. This method provides a flexible framework for real-time guaranteed online admission control, supporting dynamic bandwidth allocation and reclamation at runtime without requiring global reconfiguration, thus improving scalability. Within this framework, we further integrate a novel strategy based on network calculus (NC) theory for efficient and high-utilization bandwidth reallocation. On the one hand, the strategy focuses on adaptively balancing residual bandwidth with deadline awareness to prevent bottleneck egress ports, thereby improving admission capacity. On the other hand, it employs a non-trivial analytical result to reduce the search space, accelerating the solving process. Experimental results from both large-scale synthetic and realistic test cases show that, compared to the state-of-the-art, our method achieves an average 56\% increase in admitted flows and an average 92\% reduction in admission time. Additionally, it postpones the occurrence of bottleneck egress ports and the first rejection of admission requests, thereby enhancing adaptability.
\end{abstract}



\begin{keyword}
Time-sensitive networking (TSN) \sep bandwidth allocation \sep admission control \sep real-time performance \sep network calculus
\end{keyword}

\end{frontmatter}


\section{Introduction}
Industry 4.0 has accelerated the development of time-sensitive applications \cite{satka2023}, including industrial automation, smart grids, and automotive systems. These applications impose strict requirements for deterministic network transmissions, driving significant advancements in Time-Sensitive Networking (TSN). Moreover, with the increasing demand for dynamic applications like plug-and-play devices, there is an evolution from static to dynamic scenarios. In response to this trend, TSN integrates centralized network configuration (CNC) and centralized user configuration (CUC) through IEEE 802.1 Qcc \cite{Qcc2018} to enable dynamic configuration. However, despite IEEE 802.1 Qcc standardizing interfaces and protocols, achieving admission control for dynamic time-critical traffic remains an area with numerous open questions. The core challenge lies in implementing configurations that ensure real-time performance for dynamically changing traffic while also meeting the demands of online scenarios, including rapid response, strong scalability, and high resource utilization.

For time-triggered (TT) traffic implemented by scheduling mechanisms such as time-aware shaper (TAS) \cite{TAS2016} and cyclic queuing forwarding (CQF) \cite{CQF2015}, configuration involves the design of scheduling tables, which guarantee real-time requirements by incorporating deadline constraints \cite{feng2022a}. In order to reduce the computational complexity of static configuration methods, various online admission control methods have been developed for dynamic scenarios \cite{nayak2018, alnajim2019, huang2021a, feng2022, gartner2023, huang2023, cao2023}, enabling the flexible addition and removal of time-critical TT traffic. Nevertheless, for event-triggered (ET) traffic scheduled by mechanisms such as strict priority (SP) \cite{ieee8021q}, credit-based shaper (CBS) \cite{CBS2009}, and asynchronous traffic shaper (ATS) \cite{ATS2018}, configuration focuses on key parameters setting (e.g. bandwidth allocation), providing greater flexibility. However, ET traffic necessitates dedicated timing analysis \cite{deazua2014, ashjaei2017, zhao2021b} to guarantee real-time performance, ensuring that the worst-case delay does not exceed the end-to-end deadline. The study \cite{reusch2020} has shown that in traditional heuristic methods that rely on iterative optimization and timing analysis feedback validation, the timing analysis component consumes over 90\% of the configuration time, making them impractical for online scenarios. Although recent incremental performance analysis \cite{zhao2024a} for small configuration changes eliminates the need for global analysis at every configuration iteration and greatly reduces evaluation complexity, it still cannot avoid the need for feedback verification. When a massive number of flows change simultaneously, the complexity of incremental analysis may become nearly as high as that of global performance analysis, thereby diminishing its intended benefits. Zhao et al. \cite{zhao2024} have proposed a prior performance analysis method to reduce configuration time by avoiding iterative optimization and feedback loops. However, it requires considering the entire flow set and global egress ports, leading to high reconfiguration overhead and limited scalability in dynamic traffic scenarios. Guck et al. \cite{guck2016} and Maile et al. \cite{maile2022} have provided admission control methods for networks using SP and CBS, respectively. They ensure real-time requirements by setting delay budgets on queues and implementing delay-constrained least-cost (DCLC) routing across the network at runtime. However, this online DCLC routing increases admission time, and the fixed, unchangeable delay budgets established to handle burst cascades reduce the adaptability to dynamic flows, thereby impacting admission capacity. To address these issues, we propose a rapid, scalable, and high-utilization online admission control method for time-critical ET traffic based on the TSN/ATS+CBS architecture. Our main contributions are summarized as follows:
\begin{itemize}
	\item We propose a flexible online admission control framework with real-time guarantees. It supports dynamic resource allocation and reclamation to meet the real-time requirements of dynamic traffic, confining reconfiguration to small-scale adjustments and eliminating the need for global modifications, thus improving scalability.
		
	\item Meanwhile, within this framework, we propose an efficient and high-utilization resource adjustment strategy based on network calculus (NC) theory. It utilizes the concept of bandwidth balancing to avoid bottleneck links and applies a non-trivial analytical result during the solving process to reduce the search space, thereby enabling rapid bandwidth reallocation with high admission capacity.
	
	\item We evaluate our method against the state-of-the-art using both large-scale synthetic and realistic test cases. The results demonstrate that our method increases the number of admitted flows by an average of 56\% and reduces admission time by an average of 92\%. Additionally, it postpones the occurrence of bottleneck egress ports and the first rejection of admission requests, thus enhancing adaptability.
\end{itemize}

The rest of this paper is organized as follows. Section \ref{Background} describes the system model. Section \ref{Problem} outlines the problem statement and presents the overall solution. Section \ref{FlowAdd} introduces the flexible online admission control framework, and Section \ref{Strategy} proposes the deadline-adaptive local deadline adjustment strategy. Experimental evaluations are conducted in Section \ref{Experiment}. Finally, Section \ref{Conclusion} concludes the paper.

\section{System Model}
\label{Background}
\subsection{TSN Network and Flow Model}
In a TSN network, the network graph $\mathcal{G} = (\mathcal{N}, \mathcal{L})$ consists of a set of nodes $\mathcal{N}$ (including end-systems (ESs) and switches (SWs)) and a set of physical links $\mathcal{L}$. We define $(u,v) \in \mathcal{L}$ as the physical link from node $u$ to node $v$, also representing the corresponding egress port, with a transmission rate of $C$. The network supports both time-critical event-triggered (ET) traffic (also known as audio-video bridging (AVB) traffic) and non-time-critical best-effort (BE) traffic. An AVB flow $f \in \mathcal{F}$ is defined by the tuple $<s_{f},d_{f},l_{f},p_{f},D^{\rm E2E}_{f}>$, where $s_{f}$ is the source ES, $d_{f}$ is the destination ES, $l_{f}$ is the frame size, $p_{f}$ is the frame interval, and $D_{f}^{\rm E2E}$ is the end-to-end deadline. The maximum frame size for BE traffic is denoted as $l_{\rm BE}^{\max}$. A summary of the notations is provided in Table \ref{notation}.
\begin{figure}[!t]
	\centerline{\includegraphics[width=\columnwidth]{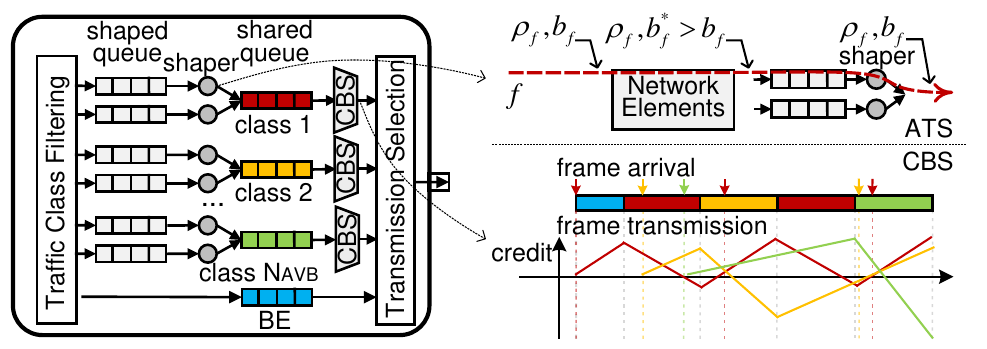}}
	\caption{TSN/ATS+CBS egress port architecture}
	\label{Model}
\end{figure}

\subsection{TSN/ATS+CBS Egress Port Architecture}
For each egress port $(u,v)\in \mathcal{L}$, this paper adopts a hybrid architecture that combines ATS \cite{ATS2018,specht2016} and CBS\cite{CBS2009}, referred to as TSN/ATS+CBS \cite{mohammadpour2018}, as depicted in Fig. \ref{Model}. This architecture performs per-class scheduling and provides eight shared queues. Among these, $N_{\rm AVB} \ (1 \leq N_{\rm AVB} \leq 8)$ queues are assigned to the time-critical AVB traffic classes $i \ (i \in [1,N_{\rm AVB}])$, while the remaining queues are available for non-time-critical BE traffic. Moreover, an AVB class $i$ is defined to have a higher priority than class $i+1$.

For AVB traffic, this hybrid architecture employs ATS to prevent burst cascades and CBS to provide flexible bandwidth reservation. Incoming AVB flows are initially directed to shaped queues according to the queuing schemes (QAR1, QAR2, QAR3) described in \cite{specht2017}. The ATS shaper, positioned after the shaped queues, enforces the committed burst size $b_{f} = l_{f}$ and the committed information rate $\rho_{f} = l_{f}/p_{f}$ for each flow $f$ by computing the eligible transmission time. The flows then proceed to the shared queues, where the CBS shaper manages traffic based on the reserved bandwidth $idSl^{(u,v)}_{i}$ (also known as the idle slope) for class $i$ at the egress port $(u,v)$. It maintains a credit value, initialized to zero, which decreases at a rate of $sdSl^{(u,v)}_{i} = idSl^{(u,v)}_{i} - C$ during frame transmission and increases at a rate of $idSl^{(u,v)}_{i}$ when frames are queued due to high-priority transmissions or negative credit. Additionally, the upper limit of bandwidth allocated to all AVB classes at the egress port, denoted by $idSl^{\max}$, is set by the designer but cannot exceed $C$ (e.g., $idSl^{\max} = 0.75C$). 

\subsection{Network Calculus-based Worst-Case Delay Analysis}
The network calculus (NC)-based worst-case delay analysis for TSN/ATS+CBS has been provided in \cite{mohammadpour2018}. In network calculus theory \cite{leboudec2001}, the arrival curve $\alpha(t)$ constrains the data bits accumulated from incoming flows over any time interval, while the service curve $\beta(t)$ represents the minimum processing capacity of the service element over any time interval. The maximum horizontal distance between the two curves is the worst-case delay bound at the egress port, as depicted in Fig. \ref{NCConcept} (a). In TSN/ATS+CBS, with the assistance of ATS reshaping, each AVB flow $f$ conforms to a committed burst size $b_{f} = l_{f}$ and a committed information rate $\rho_{f} = l_{f}/p_{f}$ at each egress port along its route. Then, the arrival curve of the aggregate flows of AVB class $i$ before the egress port $(u,v)$ is
\begin{equation}
\label{ArriveCurve}
\alpha_{i}^{(u,v)}(t) = \sum_{f\in \mathcal{F}_{i}^{(u,v)}}\left(\rho_{f}\cdot t + b_{f}\right),
\end{equation} 
where $\mathcal{F}_{i}^{(u,v)}$ is the subset of AVB flows $\mathcal{F}$ with class $i$ queuing at egress port $(u,v)$. The CBS service curve \cite{deazua2014} for AVB class $i$ at egress port $(u,v)$ is 
\begin{equation}
\beta_{i}^{(u,v)}(t) =
idSl_{i}^{(u,v)}\left[t-\frac{\sum_{j=1}^{i-1}\left(sdSl_{j}^{(u,v)}\cdot\frac{ l_{j}^{\max}}{C}\right)-l_{>i}^{\max}}{\sum_{j=1}^{i-1}idSl_{j}^{(u,v)}-C}\right]^{+},
\end{equation}
where $l_{i}^{\max}$ is the maximum frame size of AVB class $i$, and $l_{>i}^{\max} = \max_{j>i}\{l_{j}^{\max},l_{\rm BE}^{\max}\}$ is the maximum frame size of traffic with lower priorities than $i$. The upper bound of the worst-case delay for any flow $f \in \mathcal{F}_{i}^{(u,v)}$ queuing at egress port $(u,v)$ is the maximum horizontal distance between the arrival curve $\alpha_{i}^{(u,v)}(t)$ and the service curve $\beta_{i}^{(u,v)}(t)$. Furthermore, it is proven in \cite{leboudec2018} that ATS shaping after a first-in, first-out (FIFO) system does not add extra delays to the worst-case delay of the combination. Therefore, AVB flows do not experience extra worst-case delays in the reshaping queues before ATS. The worst-case delay bound $H_{i}^{(u,v)}$ for AVB class $i$ queuing at egress port $(u,v)$, as shown in Fig. \ref{NCConcept} (b), can ultimately be expressed as
\begin{equation}
\label{WorstDelayFormula}
H_{i}^{(u,v)} = \frac{\sum_{f \in \mathcal{F}_{i}^{(u,v)}} b_{f}}{idSl_{i}^{(u,v)}} +
\frac{l^{\max}}{C} + \frac{(i-1)l^{\max}}{C-\sum_{j=1}^{i-1}idSl_{j}^{(u,v)}},
\end{equation}
where $l_{i}^{\max}$ is simplified to $l^{\max}$, representing the maximum frame size in the network. In this way, Eq. (\ref{WorstDelayFormula}) remains valid even if the maximum frame sizes of other classes vary due to dynamic traffic changes.
\begin{figure}[!t]
	\centerline{\includegraphics[width=\columnwidth]{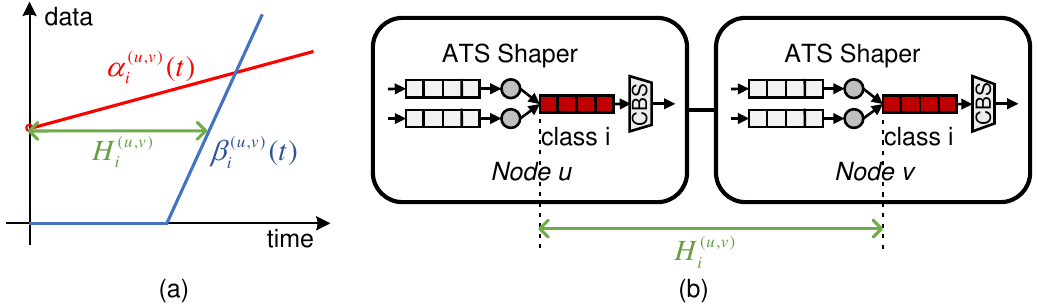}}
	\caption{(a) Concepts in network calculus theory, (b) worst-case delay bound in TSN/ATS+CBS.}
	\label{NCConcept}
\end{figure}

\begin{table}[!t]
	\caption{Summary of notations}
	\label{notation}
	\scriptsize
	\setlength{\tabcolsep}{3pt}
	\begin{footnotesize}
		\begin{tabular}{|p{35pt}|p{195pt}|}
			\hline
			Symbol&Meaning\\
			\hline
			$\mathcal{G}$&Network graph\\
			$\mathcal{N}$&Set of nodes\\
			$\mathcal{L}$&Set of physical links\\
			$(u,v)$&Physical link from node $u$ to node $v$, also representing the corresponding egress port\\
			$C$&Transmission rate of physical link\\
			$f$&An AVB flow\\
			$s_{f}$&Source ES of flow $f$\\
			$d_{f}$&Destination ES of flow $f$\\
			$l_{f}$&Frame size of flow $f$\\
			$p_{f}$&Frame interval of flow $f$\\
			$D^{\rm E2E}_{f}$&End-to-end deadline of flow $f$\\
			$b_{f}$&Committed burst size of flow $f$\\
			$\rho_{f}$&Committed information rate of flow $f$\\
			$r_{f}$&Route of flow $f$\\
			$N_{\rm AVB}$&Number of AVB classes supported by each egress port\\
			$\mathcal{F}$&Set of AVB flows\\
			$\mathcal{F}_{i}^{(u,v)}$&Set of AVB flows of class $i$ at egress port $(u,v)$\\
			$idSl^{\max}$&Upper limit of bandwidth allocated to all AVB classes at the egress port\\
			$\alpha_{i}^{(u,v)}$&Arrival curve of the aggregate flows of AVB class $i$ before the egress port $(u,v)$\\
			$\beta_{i}^{(u,v)}$&Service curve for AVB class $i$ at the egress port $(u,v)$\\
			$l_{i}^{\max}$&Maximum frame size of AVB class $i$\\
			$l_{>i}^{\max}$&Maximum frame size of traffic with lower priorities than $i$\\
			$l^{\max}$&Maximum frame size in the network\\
			$l_{\rm BE}^{\max}$&Maximum frame size of BE traffic\\
			$D_{i}^{(u,v)}$&Existing local deadline for AVB class $i$ at egress port $(u,v)$ in the existing configuration\\
			$\hat{D}_{i}^{(u,v)}$&Adjusted local deadline for AVB class $i$ at egress port $(u,v)$ in the flow addition process\\
			$\check{D}_{i}^{(u,v)}$&Adjusted local deadline for AVB class $i$ at egress port $(u,v)$ in the flow removal process\\
			$idSl_{i}^{(u,v)}$&Existing idle slope, i.e. bandwidth for AVB class $i$ at egress port $(u,v)$ in the existing configuration\\
			$\hat{idSl}_{i}^{(u,v)}$&Adjusted idle slope, i.e. bandwidth for AVB class $i$ at egress port $(u,v)$ in the flow addition process\\
			$\check{idSl}_{i}^{(u,v)}$&Adjusted idle slope, i.e. bandwidth for AVB class $i$ at egress port $(u,v)$ in the flow removal process\\
			$D_{f}^{(u,v)}$&Local deadline for flow $f$ at egress port $(u,v) \in r_{f}$\\
			$\mathcal{R}(s,d)$&Candidate route set for source-destination pair, where $s$ and $d$ are end-systems\\
			$R^{(u,v)}$&Available residual bandwidth for AVB traffic at the egress port $(u,v)$\\
			$\Phi_{i}$&Extra bandwidth allocated to AVB class $i$\\
			\hline
		\end{tabular}
	\end{footnotesize}
\end{table}

\begin{figure*}[!t]
	\centerline{\includegraphics[width=\textwidth]{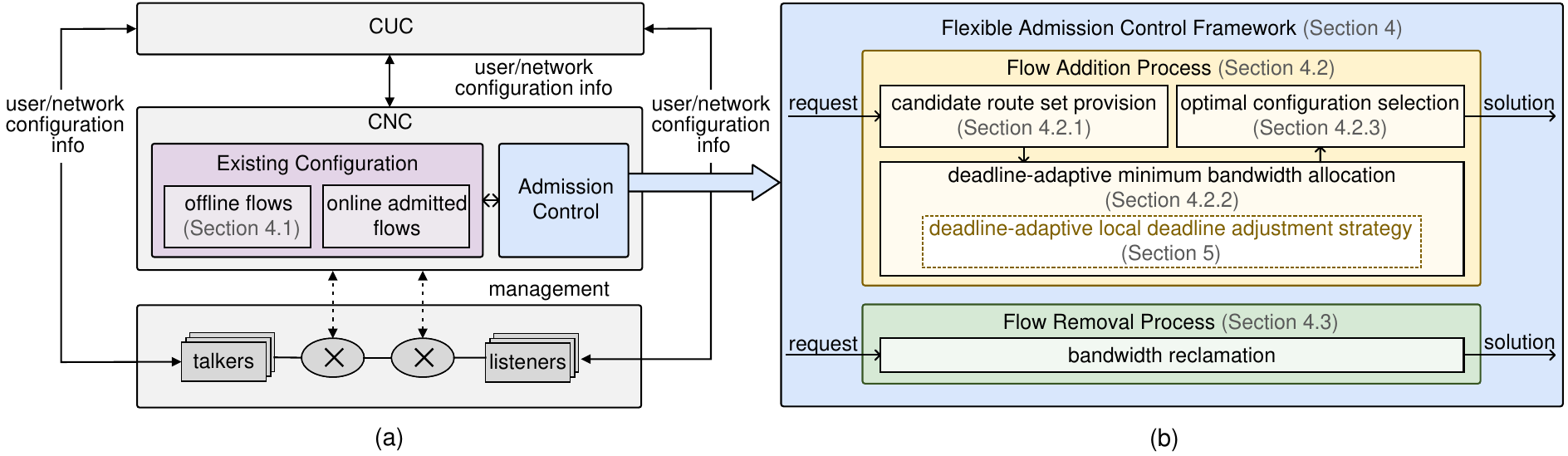}}
	\caption{(a) A fully centralized model, (b) flexible online admission control framework.}
	\label{FrameworkCNC}
\end{figure*}
\section{Problem Statement and Overall Solution}
\label{Problem}
\subsection{Problem Statement}
A fully centralized model that supports dynamic traffic changes comprises two main components: the centralized user configuration (CUC) and the centralized network configuration (CNC), as illustrated in Fig. \ref{FrameworkCNC}(a). The CUC collects flow requests from end systems and forwards them to the CNC, which is responsible for managing existing configurations and executing admission control. When a new flow is requested, the CNC evaluates the available resources to determine if the flow can be admitted and, if so, performs the necessary resource allocation. When an existing flow is removed, the CNC reclaims the resources. This resource allocation and reclamation should not only satisfy the end-to-end deadline requirements for all flows but also address the needs of online admission control scenarios, including rapid response, scalability, and high resource utilization.

Current resource allocation methods for ET traffic in TSN still have limitations in online admission control scenarios. Conventional heuristic methods that use iterative optimization and timing analysis feedback validation can satisfy end-to-end deadline requirements. However, their high computational complexity hinders rapid response in online admission control scenarios. \cite{zhao2024} decomposes global end-to-end deadlines into local deadlines at individual ports and allocates only the minimum bandwidth required to satisfy these local deadlines. This approach avoids repeated global timing analysis, thereby significantly reducing computational complexity. However, since it must account for complete flows and global egress ports, frequent flow changes incur high reconfiguration overhead, thereby limiting scalability in dynamic large-scale networks. \cite{maile2022} pre-configures fixed delay budgets for egress ports to confine reconfiguration to only the specific routes of changing flows. However, these fixed delay budgets can lead to imbalanced resource allocation, thereby reducing overall resource utilization. Therefore, there is an urgent need for a dynamic resource allocation approach for admission control of time-critical ET traffic that not only meets end-to-end requirements but also provides rapid responsiveness, strong scalability, and high bandwidth utilization.

\subsection{Overall Admission Control Solution}
To address these issues, we propose an online admission control framework for time-critical ET traffic based on the TSN/ATS+CBS architecture, as shown in Fig. \ref{FrameworkCNC}(b). It establishes flexible local deadlines to enforce port-level real-time constraints on traffic transmission, supporting low-overhead runtime adjustments to both local deadlines and bandwidth allocation to provide real-time guarantees for dynamically changing traffic.

Before the system operates, the initial flows are configured in an offline scenario (Section \ref{InitialConf}). When a new flow is requested to be added (Section \ref{FlowAddition}), the method evaluates every route in the candidate set. For each candidate route, it employs an efficient, high-utilization strategy to calculate the local deadline adjustments required to accommodate the new flow (Section \ref{Strategy}), followed by the computation of the minimum bandwidth necessary to meet these adjusted local deadlines. Finally, the candidate route and its associated bandwidth allocation, yielding the lowest overall bandwidth consumption, are selected as the admission scheme for the new flow. When an existing flow is removed (Section \ref{FlowRemoval}), the method adjusts the local deadlines and bandwidth allocations along its route to reclaim resources, thus conserving bandwidth for future flow admissions and non-real-time traffic transmission.

Compared to current approaches, our method incorporates several unique design features that make it well-suited for online admission control of time-critical ET traffic in dynamic scenarios, as detailed below:

(1) \textit{Flexible Adjustable Local Deadline:}
Local deadlines serve as critical constraints on port-level bandwidth allocation. Hence, their setting has a significant impact on overall bandwidth utilization. In contrast to methods that rely on fixed delay budgets, such as \cite{maile2022}, our method supports flexible adjustable local deadlines at each egress port. This design enhances allocation flexibility and promotes improved bandwidth utilization.

\begin{figure}[!t]
\centerline{\includegraphics[width=\columnwidth]{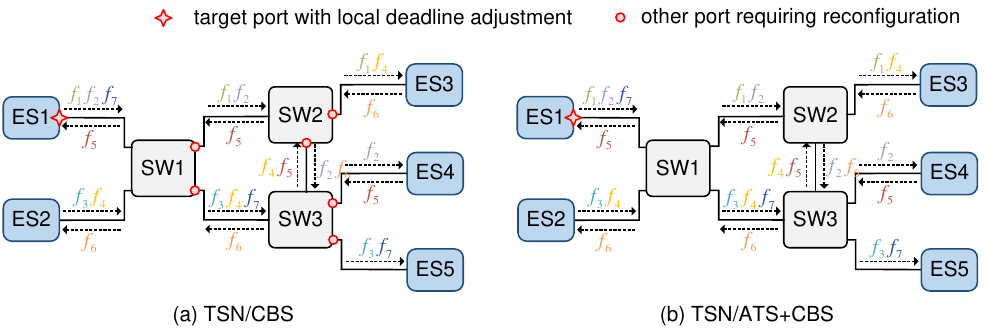}}
\caption{Comparison of reconfiguration overhead caused by local deadline adjustment in TSN/CBS and TSN/ATS+CBS.}
\label{Adjustment}
\end{figure}
(2) \textit{Reducing Reconfiguration Overhead:}
Local deadline adjustments in typical TSN/CBS architecture can cause high reconfiguration overhead, as illustrated in Fig. \ref{Adjustment}. When the local deadline at an egress port increases, the jitter of flows at that port also grows, leading to more pronounced burstiness at subsequent ports. Thus, all subsequent ports must be reconfigured to ensure that their local deadlines are not violated, thus causing high reconfiguration overhead. To address this issue, we integrated ATS into TSN/CBS to eliminate burst cascades, thereby preventing the effects of local deadline adjustments from propagating to other ports. This design decouples the configuration of egress ports, effectively confining the reconfiguration scope and reducing overhead.

(3) \textit{Efficient and High-uitilization Resource Reallocation:}
In response to traffic changes, we aim to rapidly adjust bandwidth allocation to meet the end-to-end deadlines of ET traffic while maximizing bandwidth utilization. Although \cite{zhao2024} provides an approach for deriving the minimum required bandwidth at the port level from given local deadlines, determining appropriate local deadlines for each port remains an open problem. Therefore, we propose an efficient and high-utilization local deadline adjustment strategy. It aims to balance residual bandwidth to avoid bottlenecks, thereby improving admission capacity. Furthermore, it formally derives a non-trivial coupling relationship among local deadlines under balanced residual bandwidth conditions, reducing the solution space and enabling rapid solving.

\section{Flexible Online Admission Control Framework}
\label{FlowAdd}
\subsection{Configuration Before Admission Control}
\label{InitialConf}
Before the online admission control at runtime, the existing configuration should provide real-time guarantees for offline flows. This configuration is based on a QoS-based minimum bandwidth allocation analytical method \cite{zhao2024} and is carried out in two stages:

The first stage is to determine the local deadline for each class at every egress port, ensuring that the end-to-end deadline requirements of all offline flows are met. For any egress port $(u,v)$ and AVB class $i$, the local deadline $D^{(u ,v)}_{i}$ can be obtained by
\begin{equation}
\label{S1}
\begin{aligned}
&Stage \ 1:\forall (u,v) \in \mathcal{L}, i \in [1,N_{\rm AVB}]:  \quad\quad\quad\quad\quad\quad\quad\quad\quad\quad\quad\quad \\
&\quad\quad\quad\quad\quad\quad\qquad D_{i}^{(u,v)} =\min_{f\in \mathcal{F}_{i}^{(u,v)}}\Big\{D_{f}^{(u,v)}\Big\},\\
\end{aligned}
\end{equation}
where
\begin{equation}
\label{flowdeadline}
\begin{aligned}
\sum_{(u,v) \in r_{f}} D_{f}^{(u,v)} \leq D_{f}^{\rm E2E},\\
\end{aligned}
\end{equation}
where $\mathcal{F}^{(u,v)}_{i}$ denotes the set of existing flows (currently only offline flows) with class $i$ queuing at egress port $(u,v)$, $r_{f}$ denotes the route of flow $f$. $D_{f}^{(u,v)}$ denotes the local deadline of flow $f$ at egress port $(u,v) \in r_{f}$, with the constraint that the sum of local deadlines along its path does not exceed its end-to-end deadline. $D_{f}^{(u,v)}$ for offline flows can be obtained by \cite{zhao2024}. In order to meet the strictest deadlines, the local deadline for each class at every egress port in Eq. (\ref{S1}) is set to the minimum of the local deadlines for all flows within that class at that port. 

The second stage is to allocate the minimum bandwidth for each class at every egress port to satisfy the local deadlines. This stage is defined as
\begin{equation}
\begin{aligned}
\label{MiniBand}
&Stage \ 2:\forall (u,v) \in \mathcal{L}, i \in [1,N_{\rm AVB}]:\\
&idSl_{i}^{(u,v)} = 
\max\Bigg\{\underbrace{\frac{\sum_{f\in \mathcal{F}_{i}^{(u,v)} }b_{f} }{D_{i}^{(u,v)}-\frac{l^{\max}}{C} - \frac{(i-1)l^{\max}}{C-\sum_{j=1}^{i-1}idSl_{j}^{(u,v)}}}}_{first},
\underbrace{\sum_{f \in \mathcal{F}_{i}^{(u,v)}}\rho_{f} }_{second}\Bigg\},\\
\end{aligned}
\end{equation}
where the first term determines the minimum bandwidth required to satisfy the local deadline $D_{i}^{(u,v)}$ for each class $i \in [1, N_{\rm AVB}]$ at every egress port $(u,v) \in \mathcal{L}$, as proven in \cite{zhao2024}, the second term maintains network stability. Note that since we leverage ATS in the admission control architecture to avoid burst cascades, the burst size $b_{f}$ and long-term rate $\rho_{f}$ of flow $f$ at the egress port $(u,v)$ in Eq. (\ref{MiniBand}) remain the same as they were when sent from the source ES. Additionally, Eq. (\ref{MiniBand}) reveals that the minimum bandwidth allocation to lower priority classes depends on the allocation results of higher priority classes. Therefore, the minimum bandwidth should be allocated from high priority to low priority.

Certain network configuration information needs to be stored in the CNC, which will serve as a reference for online admission control. For each flow $f \in \mathcal{F}$, its tuple, route $r_{f}$, and the local deadlines $D_{f}^{(u,v)}$ along $r_{f}$ should be stored. Additionally, for each egress port $(u,v) \in \mathcal{L}$ and each class $i \in [1, N_{\rm AVB}]$, the local deadline $D_{i}^{(u,v)}$ and the bandwidth allocation $idSl_{i}^{(u,v)}$ should also be stored. Subsequently, we will explore how to perform online admission control relying on the existing configuration. We aim to maximize the number of admitted flows with minimized bandwidth usage while ensuring that the deadline requirements of both new and existing flows are met, by quickly and dynamically allocating and reclaiming bandwidth resources. The updated configuration will replace the existing one and be stored in the CNC for the next admission control operation.

\subsection{Flow Addition Process}
\label{FlowAddition}
\begin{figure}[!t]
	\centerline{\includegraphics[width=\columnwidth]{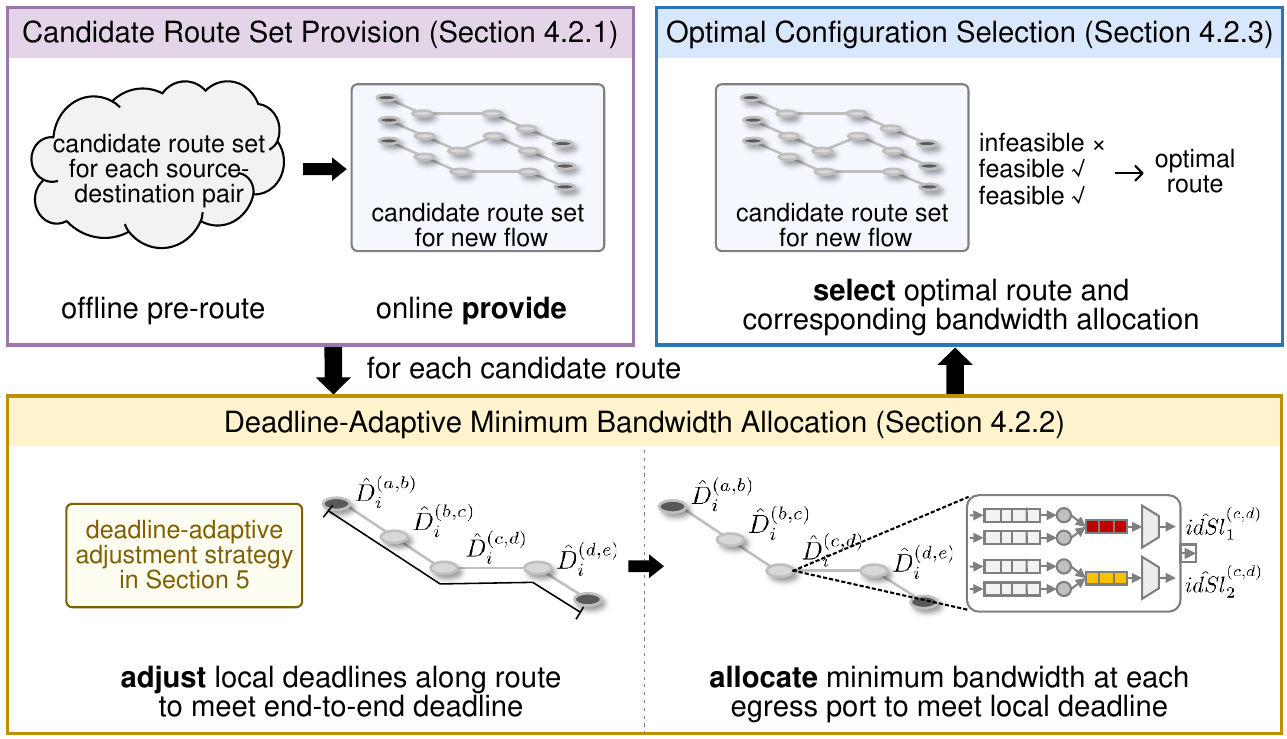}}
	\caption{Flow Addition Process.}
	\label{FlowAddionFig}
\end{figure}
When a new flow $f^{+}$ requests to be added, the flow addition process in our admission control framework evaluates, based on the available bandwidth, whether to admit the request and how to provide configuration if admitted. The configuration, including the route and allocated bandwidth, should satisfy the end-to-end deadline requirements of both the new flow $f^{+}$ and existing flows (including both offline flows and online admitted flows) while maximizing the utilization of remaining network resources. The specific procedure of the flow addition process is shown in Fig. \ref{FlowAddionFig}. Given that online routing is time-consuming, the candidate route set for each source-destination pair is generated offline through pre-routing. During the online phase, a new flow is assigned the corresponding candidate route set based on its source ES and destination ES (Section \ref{PreRouting}). Subsequently, the process provides a deadline-adaptive minimum bandwidth allocation scheme for each route within the candidate set of the new flow (Section \ref{PathLevelAjust}). This scheme aims to quickly meet end-to-end deadline requirements while adaptively balancing the residual bandwidth (further detailed in Section \ref{Strategy}). Finally, this process evaluates all the routes within the candidate set of the new flow along with their corresponding bandwidth allocation schemes and selects the optimal configuration (Section \ref{PathSelection}). Detailed discussions will be provided as follows.

\subsubsection{Candidate Route Set Provision}
\label{PreRouting}
When a new flow $f^{+}$ requests to be added, we first need to provide it with a route. To avoid the complexity of online routing, we pre-route a set of candidate routes for each source-destination pair, enabling efficient online retrieval. We use the k-shortest path algorithm \cite{yen1971} to generate the candidate route set ${\mathcal{R}(s,d)}$ for each source-destination pair, where $s$ and $d$ are end-systems. We choose the k-shortest path algorithm for two main reasons. On the one hand, we aim to select shorter routes to achieve larger local deadlines for each egress port along the route. This ensures that more residual bandwidth is reserved according to Eq. (\ref{MiniBand}), providing greater flexibility in bandwidth allocation while maintaining end-to-end deadline performance (Section \ref{PathLevelAjust}). On the other hand, providing k routes offers greater flexibility in route selection, thereby better balancing the remaining network bandwidth (Section \ref{PathSelection}). During the online phase, a new flow $f^{+}$ is assigned the corresponding candidate route set $\mathcal{R}(s_{f^{+}}, d_{f^{+}})$ based on its source ES $s_{f^{+}}$ and destination ES $d_{f^{+}}$.

\subsubsection{Deadline-Adaptive Minimum Bandwidth Allocation}
\label{PathLevelAjust}
When the new flow $f^{+}$ is assigned to a candidate route $r_{f^{+}} \in \mathcal{R}(s_{f^{+}}, d_{f^{+}})$, we need to find a bandwidth allocation scheme ($\hat{idSl}$) that meets the end-to-end deadline requirements. This procedure is generally similar to the stages used for configuring offline flows in Section \ref{InitialConf}, with the key difference being in Stage 1, where local deadlines are determined. Since adding the new flow may render existing local deadlines inadequate, we focus on adjusting them to meet the updated end‑to‑end requirements. In order to minimize changes to the existing configuration, we adjust the local deadlines only at the egress ports along the route $r_{f^{+}}$ of the new flow with the assistance of ATS rather than reallocating local deadlines for all egress ports in the network. Once the local deadlines are set, we can directly apply the method from Stage 2 to calculate the bandwidth allocation scheme for adding flow $f^{+}$ along the candidate route $r_{f^{+}}$.

To meet the end-to-end deadline $D_{f^{+}}^{\rm E2E}$ of the new flow $f^{+}$, we define the adjusted local deadline $\hat{D}_{i}^{(u,v)}$ for class $i$ (to which $f^{+}$ belongs) at each egress port $(u,v) \in r_{f^{+}}$ as
\begin{equation}
\label{adjustedD}
\begin{aligned}
\hat{D}_{i}^{(u,v)} = 
\begin{cases}
D_{i}^{(u,v)}, &\sum\limits_{(u,v) \in r_{f^{+}}}D_{i}^{(u,v)} \leq D_{f^{+}}^{\rm E2E}\\
\rm result \ \rm from \ Alg. 1, &\sum\limits_{(u,v) \in r_{f^{+}}}D_{i}^{(u,v)} > D_{f^{+}}^{\rm E2E}\\
\end{cases},
\end{aligned}
\end{equation}
where $D^{(u,v)}_{i}$ is the existing local deadline for class $i$ at egress port $(u,v)$, and $D^{\rm E2E}_{f^{+}}$ is the end-to-end deadline of the new flow $f^{+}$. There are two cases for adjusting the local deadlines:

i) \textit{No Adjustment Required:} When the sum of the existing local deadlines $D^{(u,v)}_{i}$ along the route $r_{f^{+}}$ meets the end-to-end deadline of the new flow $f^{+}$, i.e., $\sum_{(u,v) \in r_{f^{+}}}D_{i}^{(u,v)} \leq D_{f^{+}}^{\rm E2E}$, there is no need for further adjustments to the local deadlines to meet the end-to-end deadline requirements. As a result, the local deadline settings remain unchanged, i.e., $\hat{D}^{(u,v)}_{i} = D^{(u,v)}_{i}$ for $(u,v) \in r_{f^{+}}$. 

ii) \textit{Adjustment Required:} When the sum of the existing local deadlines $D^{(u,v)}_{i}$ along the route $r_{f^{+}}$ exceeds the end-to-end deadline of the new flow $f^{+}$, i.e., $\sum_{(u,v) \in r_{f^{+}}}D_{i}^{(u,v)} > D_{f^{+}}^{\rm E2E}$, $D^{(u,v)}_{i}$ at each egress port $(u,v) \in r_{f^{+}}$ needs to be reduced to meet the end-to-end deadline of new flow $f^{+}$. However, adjusting $D^{(u,v)}_{i}$ is not straightforward, as reducing these local deadlines will affect the distribution of the remaining bandwidth across different egress ports in the network, requiring careful balancing to optimize resource utilization. We determine $\hat{D}_{i}^{(u,v)}$ using Algorithm 1, specifically discussed in Section \ref{DelayAdjust}, by employing a deadline-adaptive local deadline adjustment strategy to balance the remaining bandwidth resources. 

After obtaining the adjusted local deadlines $\hat{D}^{(u,v)}_{i}$ for class $i$ at each egress port $(u,v) \in r_{f^{+}}$, the corresponding bandwidth allocation $\hat{idSl}_{j}^{(u,v)} \ ( j \in [1,N_{\rm AVB}], (u,v) \in \mathcal{L})$ for adding flow $f^{+}$ on candidate route $r_{f^{+}}$ is determined. This is done by substituting existing flow set $\mathcal{F}$ with $\mathcal{F} \cup \{f^{+}\}$ and $D_{i}^{(u,v)}$ with $\hat{D}_{i}^{(u,v)}$ and then calculating using Eq. (\ref{MiniBand}). According to Eq. (\ref{MiniBand}), the addition of flow $f^{+}$ only affects the bandwidth allocation for classes with priorities equal to or lower than that of flow $f^{+}$ on route $r_{f^{+}}$, without causing large-scale reconfiguration. Additionally, the local deadline $D_{f^{+}}^{(u,v)}$ for new flow $f^{+}$ along the route $r_{f^{+}}$ is assigned the value of $\hat{D}_{i}^{(u,v)}$, ensuring that $\sum_{(u,v) \in r_{f^{+}}} D_{f^{+}}^{(u,v)} \leq D_{f^{+}}^{\rm E2E}$ is satisfied in accordance with Eq. (\ref{flowdeadline}) from Stage 1.

\subsubsection{Optimal Configuration Selection}
\label{PathSelection}
Finally, we obtain the optimal configuration by selecting the optimal route $r_{f^{+}}^{\rm opt}$ from the $k$ candidate routes $r_{f^{+}} \in \mathcal{R}(s_{f^{+}}, d_{f^{+}})$ based on the corresponding bandwidth allocation ($\hat{idSl}$) obtained from Section \ref{PathLevelAjust}. Initially, we need to evaluate the feasibility of each candidate route, which is determined by two conditions. The first condition is that feasible local deadlines can be provided by Eq. (\ref{adjustedD}). The second condition is that the sum of bandwidth allocations at each egress port does not exceed the upper limit $idSl^{\max}$, i.e., $\sum_{i=1}^{N_{\rm AVB}} \hat{idSl}_{i}^{(u,v)} \leq idSl^{\max}$. If at least one feasible route exists, we will select the optimal one to admit the new flow $f^{+}$ using the following method. If no feasible route exists, it is impossible to find a route that meets the end-to-end deadline requirement of the new flow, and the request to add the new flow will be rejected.

The optimal route $r_{f^{+}}^{\rm opt}$ is defined as
\begin{equation}
\label{route}
r_{f^{+}}^{\rm opt} = \arg \min_{r_{f^{+}} \in \mathcal{R}(s_{f^{+}},d_{f^{+}}) \land \rm feasible} \{cost(r_{f^{+}})\},
\end{equation} 
and
\begin{equation}
\label{CostEq}
cost(r_{f^{+}}) = 
\sum_{(u,v)\in \mathcal{L}}\Bigg(\frac{1}{idSl^{\max} - \sum_{i=1}^{N_{\rm AVB}}\hat{idSl}_{i}^{(u,v)}}- \frac{1}{idSl^{\max}}\Bigg)^{2},\\
\end{equation}
where $cost(r_{f^{+}})$ represents the network cost associated with the bandwidth allocation scheme for adding flow $f^{+}$ along the candidate route $r_{f^{+}}$. Considering its residual bandwidth, the port cost of $(u,v)$ is defined as $1/(idSl^{\max} - \sum_{i=1}^{N_{\rm AVB}}\hat{idSl}_{i}^{(u,v)})$. When no bandwidth is utilized, this cost reaches its minimum value of $1/idSl^{\max}$. The network cost $cost(r_{f^{+}})$ in Eq. (\ref{CostEq}) is determined by calculating the difference between the actual port cost and the minimum port cost, squaring this difference, and summing the results of all egress ports. When the residual bandwidth of an egress port is low, the network cost $cost(r_{f^{+}})$ rises sharply, indicating bottleneck egress ports nearing resource exhaustion, potentially hindering future flow admissions. Consequently, we can select the optimal route $r_{f^{+}}^{\rm opt}$ by minimizing the network cost as defined in Eq. (\ref{CostEq}), which balances the residual bandwidth. This helps avoid bottleneck egress ports caused by resource exhaustion, thereby enhancing the capacity of the network to accommodate future traffic.

\subsection{Flow Removal Process}
\label{FlowRemoval}
After removing an existing flow $f^{-}$, the flow removal process needs to provide a bandwidth allocation scheme ($\check{idSl}$) to reclaim bandwidth resources for future traffic access. First, we remove the local deadlines related to flow $f^{-}$. The adjusted local deadline $\check{D}_{i}^{(u,v)}$ for class $i$ (to which $f^{-}$ belongs) at egress port $(u,v) \in r_{f^{-}}$ is
\begin{equation}
\check{D}_{i}^{(u,v)} = \min_{f\in (\mathcal{F} \setminus \{f^{-}  \})_{i}^{(u,v)}} \Big\{D_{f}^{(u,v)}\Big\}.
\end{equation}
Subsequently, the adjusted bandwidth allocation $\check{idSl}_{j}^{(u,v)} \ ( j \in [1,N_{\rm AVB}], (u,v) \in \mathcal{L})$ for removing flow $f^{-}$ is determined. This is done by substituting existing flow set $\mathcal{F}$ with $\mathcal{F} \setminus \{f^{-}\}$ and $D_{i}^{(u,v)}$ with $\check{D}_{i}^{(u,v)}$ and then calculating in Eq. (\ref{MiniBand}). Similar to flow addition, the bandwidth reclamation after removing flow $f^{-}$ affects only the bandwidth allocation for classes with priorities equal to or lower than that of flow $f^{-}$ on route $r_{f^{-}}$, without the need for large-scale reconfiguration.

\section{Deadline-Adaptive Local Deadline Adjustment Strategy}
\label{Strategy}
\subsection{Challenges in Local Deadline Adjustment}
As mentioned in Section \ref{PathLevelAjust}, if the sum of the existing local deadlines $D_{i}^{(u,v)}$ for class $i$ (to which $f^{+}$ belongs) along route $r_{f^{+}}$ exceeds the end-to-end deadline $D_{f^{+}}^{\rm E2E}$, $D_{i}^{(u,v)}$ for each $(u,v) \in r_{f^{+}}$ should be adjusted to meet the end-to-end deadline. Since local deadlines serve as constraints on port-level bandwidth allocation, their adjustment plays a crucial role in overall network bandwidth utilization. However, developing an efficient and high-utilization local deadline adjustment strategy is quite challenging. If the local deadlines along the route are treated as independent decision variables, the problem becomes a multivariable combinatorial optimization task, which generally entails high computational costs. Some studies \cite{zotero-5194} have directly modified the local deadlines $D_{i}^{(u,v)}$, adjusting them either equally or proportionally. Although these approaches are simple and computationally efficient, they do not balance residual bandwidth, which can lead to early resource exhaustion at bottleneck egress ports and thus reduce admission capacity.

\begin{figure}[!t]
	\label{DelayAdjust}
	\centerline{\includegraphics[width=\columnwidth]{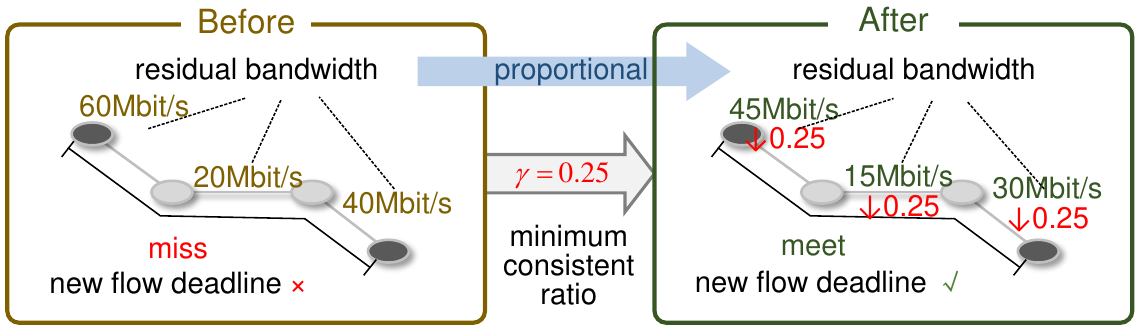}}
	\caption{An example of local deadlines adjusted by minimum consistent ratio $\gamma = 0.25$ to meet the new flow deadline.}
	\label{AdjustFigure}
\end{figure}

\subsection{Local Deadline Adjustment Strategy}
In this section, we propose a local deadline adjustment strategy, as illustrated in Fig. \ref{AdjustFigure}. This strategy aims to adjust local deadlines to meet end-to-end deadlines while maximizing resource utilization by reducing bandwidth consumption and balancing residual bandwidth. Specifically, we define $\gamma$ as the ratio of extra bandwidth an egress port receives in the adjustment to its original residual bandwidth, and require that every port along $r_{f^+}$ adopt the same $\gamma$. This can synchronize resource usage across egress ports and helps prevent the premature emergence of bottleneck links. Building on this, we further derive an analytical expression that maps $\gamma$ to the corresponding adjusted local deadline $\hat{D}_{i}^{(u,v)}$ for each egress port. This mapping effectively tackles the complexity introduced by high‑priority bandwidth adjustments on lower‑priority traffic and uniquely determines $\hat{D}_{i}^{(u,v)}$ for each $(u,v) \in r_{f^{+}}$ through analytical computation based on $\gamma$. As a result, all local deadlines are now coupled through $\gamma$, reducing the original multivariable combinatorial optimization problem to a single‑variable problem of determining $\gamma$. Finally, recognizing the nonlinearity of the mapping, we employ a binary search to find the minimum $\gamma$ that satisfies the end‑to‑end deadline requirements, thereby minimizing bandwidth usage. The process for obtaining this minimum $\gamma$ and the corresponding updated local deadlines $\hat{D}_{i}^{(u,v)}$ for each $(u,v) \in r_{f^{+}}$ is detailed in the following algorithm.

\textbf{Algorithm:} 
We propose a local deadline adjustment algorithm, with its pseudo-code presented in Algorithm 1. The algorithm first calculates the available residual bandwidth $R^{(u,v)}$ for each $(u,v) \in r_{f^{+}}$ using the \texttt{CalResBand} function described in Section \ref{CalculateRes} (lines 1-3). Then, the algorithm calculates the extra bandwidth $\gamma \cdot R^{(u,v)}$ needed to meet the end-to-end deadline of the new flow, thereby reducing the local deadlines $D_{i}^{(u,v)}$ along the route $r_{f^{+}}$ by allocating this extra bandwidth. As indicated in Eq. (\ref{MiniBand}), achieving a smaller adjusted local deadline requires a larger bandwidth allocation. Initially, assuming all available residual bandwidth $R^{(u,v)}$ is used as extra bandwidth, the algorithm calculates the minimum possible adjusted local deadlines $\hat{D}_{i}^{(u,v),\rm min}$ for each $(u,v) \in r_{f^{+}}$ using the \texttt{MapBand2D} function described in Section \ref{SectionMap} (lines 4-6). If the sum of the minimum possible adjusted local deadlines $\hat{D}_{i}^{(u,v),\rm min}$ along the route $r_{f^{+}}$ cannot meet the end-to-end deadline $D_{f^{+}}^{\rm E2E}$, it indicates that no feasible local deadline adjustment exist (lines 18). Conversely, if the sum is smaller than the end-to-end deadline $D_{f^{+}}^{\rm E2E}$, it means that an extra bandwidth smaller than $R^{(u,v)}$ can be found to meet $D_{f^{+}}^{\rm E2E}$. The algorithm seeks to find the minimum extra bandwidth needed to adjust the local deadlines to meet $D_{f^{+}}^{\rm E2E}$ (lines 8-16). Since Eq. (\ref{MiniBand}) indicates that a smaller adjusted local deadline requires a larger bandwidth allocation, the minimum extra bandwidth is achieved by maximizing the adjusted local deadlines so that their sum exactly meets $D_{f^{+}}^{\rm E2E}$. In order to achieve this, the algorithm performs a binary search to determine the minimum consistent ratio $\gamma$ $(0 < \gamma \leq 1)$ such that the sum of the adjusted local deadlines equals $D_{f^{+}}^{\rm E2E}$. In the binary search process, $\gamma$ represents the consistent ratio of extra bandwidth to available residual bandwidth, $\delta$ indicates the adjustment to $\gamma$, and $slack$, initialized to -1, denotes the difference between the end-to-end deadline of the new flow and the sum of the adjusted local deadlines. In each iteration, the \texttt{MapBand2D} function calculates the adjusted local deadlines $\hat{D}_{i}^{(u,v)}$ based on the extra bandwidth $\gamma \cdot R^{(u,v)}$. The $slack$ is then computed, and $\gamma$ is adjusted by $\delta$ based on whether $slack$ is positive or negative, bringing it closer to 0. The binary search process ends when $slack = 0$, at which point the adjusted local deadlines $\hat{D}_{i}^{(u,v)}$ for each $(u,v) \in r_{f^{+}}$ are obtained. The detailed implementations of the \texttt{CalResBand} and \texttt{MapBand2D} functions are as follows.
\begin{algorithm}[t]
	\caption{Local Deadline Adjustment Algorithm}
	\begin{algorithmic}[1]  
		\label{MappingAlgorithm}
		\Statex \hspace*{-14pt} \textbf{Input:} $f^{+}$: flow for addition, $r_{f^{+}}$: candidate route;
		\Statex \hspace*{-14pt} \textbf{Output:} $\{\hat{D}_{i}^{(u,v)}\}_{(u,v)\in r_{f^{+}}}$: adjusted local deadlines for class
		\Statex \hspace*{-11pt}$i$ along route $r_{f^{+}}$; 
		\For{$(u,v) \in r_{f^{+}}$}
		\State $R^{(u,v)} \gets \texttt{CalResBand}()$
		\EndFor
		\For{$(u,v) \in r_{f^{+}}$}
		\State $\hat{D}_{i}^{(u,v),\min} \gets \texttt{MapBand2D}(R^{(u,v)})$
		\EndFor                
		\If{$\sum_{(u,v)\in r_{f^{+}}} \hat{D}_{i}^{(u,v),\min} \leq D_{f^{+}}^{\rm E2E}$}
		\State $\gamma \gets 1$, $\delta \gets 1$, $slack \gets -1$
		\While{$slack \neq 0$}
		\For{$(u,v) \in r_{f^{+}}$}
		\State $\hat{D}_{i}^{(u,v)} \gets \texttt{MapBand2D}(\gamma \cdot R^{(u,v)})$
		\EndFor
		\State $slack \gets D_{f^{+}}^{\rm E2E} - \sum_{(u,v)\in r_{f^{+}}} \hat{D}_{i}^{(u,v)}$
		\State $\delta \gets \delta / 2$
		\State $\gamma \gets (slack > 0) ? \gamma - \delta : \gamma + \delta$
		\EndWhile
		\Else
		\State $\{\hat{D}_{i}^{(u,v)}\}_{(u,v)\in r_{f^{+}}} \gets \text{infeasible}$
		\EndIf
	\end{algorithmic}
\end{algorithm}

\subsubsection{Port-Level Calculating Residual Bandwidth}
\label{CalculateRes}
In line 2 of Algorithm 1, the \texttt{CalResBand} function computes the available residual bandwidth $R^{(u,v)}$ at each egress port $(u,v)$ along the route $r_{f^{+}}$, given by
\begin{equation}
\label{ResidualBand}
R^{(u,v)} = idSl^{\max} - \sum_{j=1}^{N_{\rm AVB}}\bar{idSl}_{j}^{(u,v)},
\end{equation}
where $\bar{idSl}_{j}^{(u,v)}$ denotes the already allocated bandwidth to satisfy the existing local deadlines. Compared to $idSl_{j}^{(u,v)}$ in Eq. (\ref{MiniBand}), which only considers the existing flow set $\mathcal{F}$, $\bar{idSl}_{j}^{(u,v)} \ (j \in [1,N_{\rm AVB}])$ additionally includes the new flow $f^{+}$, and is defined as
\begin{equation}
\label{ddotidSl}
\bar{idSl}_{j}^{(u,v)} = \frac{\sum_{f\in (\mathcal{F} \cup \{f^{+}\})_{j}^{(u,v)} }b_{f}}{D_{j}^{(u,v)}-\frac{l^{\max}}{C} - \frac{(j-1)l^{\max}}{C-\sum_{k=1}^{j-1}\bar{idSl}_{k}^{(u,v)}}},
\end{equation}
which is obtained by replacing $\mathcal{F}$ with $\mathcal{F} \cup \{f^{+}\}$ in the first term of Eq. (\ref{MiniBand}). As described in Algorithm 1, in order to balance the residual bandwidth across egress ports, this available residual bandwidth $R^{(u,v)}$ is used to determine the extra bandwidth required at each egress port to adjust the local deadline $D_{i}^{(u,v)}$.

\subsubsection{Port-Level Mapping Extra Bandwidth to Local Deadline}
\label{SectionMap}
In lines 5 and 11 of Algorithm 1, the \texttt{MapBand2D} function returns the adjusted local deadline $\hat{D}_{i}^{(u,v)}$ when the extra bandwidth is $\gamma \cdot R^{(u,v)} \ (0 < \gamma \leq 1)$. The implementation of the \texttt{MapBand2D} function is detailed in Theorem \ref{TheoadjustD}.

\begin{theorem}[Mapping Extra Bandwidth to Local Deadline] \label{TheoadjustD} Assume that the local deadline for class $i$ is adjusted from \(D_{i}^{(u,v)}\) to \(\hat{D}_{i}^{(u,v)}\), and \(D_{j}^{(u,v)}\) for any other class $j$ remains unchanged. To meet the adjusted local deadlines, the total extra bandwidth that needs to be allocated to all classes at port $(u, v)$ is \(\gamma \cdot R^{(u,v)}\). Then, there exists a mapping from $\gamma \cdot R^{(u,v)}$ to \(\hat{D}_{i}^{(u,v)}\), given by
\begin{equation} 
\begin{aligned} \label{Df} \hat{D}_{i}^{(u,v)} 
&= \frac{\sum_{f\in (\mathcal{F} \cup \{f^{+}\})_{i}^{(u,v)} }b_{f}}{\bar{idSl}_{i}^{(u,v)} + \Phi_{i}} + \frac{l^{\max}}{C} + \frac{(i-1)l^{\max}}{C-\sum_{j=1}^{i-1}\bar{idSl}_{j}^{(u,v)}}, 
\end{aligned} 
\end{equation} 
where \(\Phi_{i}\) denotes the extra bandwidth allocated to class \(i\) that depends on \(\gamma \cdot R^{(u,v)}\), determined by the recursive relationship 
\begin{equation} \sum_{k=i}^{j-1} \Phi_{k} = g\left(\sum_{k=i}^{j} \Phi_{k}\right), \quad \text{for } \, i < j \leq N_{\rm AVB}, 
\end{equation} 
with \(g(\cdot)\) defined in Lemma~\ref{lemma1}, and the initial term is \begin{equation} 
\sum_{k=i}^{N_{\rm AVB}} \Phi_{k} = \gamma \cdot R^{(u,v)}. 
\end{equation}
\end{theorem}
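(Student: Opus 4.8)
The plan is to build the mapping constructively: read each term of $\hat{D}_i^{(u,v)}$ straight off the worst-case delay bound \eqref{WorstDelayFormula}, and then collapse the chain of bandwidth compensations that a change at class $i$ forces on the lower-priority classes into the single recursion in the statement. The first thing I would note is that in \eqref{WorstDelayFormula} a class's bound depends only on its own idle slope (first term) and on the idle slopes of strictly higher-priority classes (third term), and never on idle slopes of equal or lower priority. Hence increasing the bandwidth of class $i$ does not disturb the bounds of classes $1,\dots,i-1$, so they need no extra bandwidth and $\Phi_k=0$ for $k<i$. Writing the post-adjustment class-$i$ bound with idle slope $\bar{idSl}_i^{(u,v)}+\Phi_i$ and with the unchanged slopes $\bar{idSl}_1^{(u,v)},\dots,\bar{idSl}_{i-1}^{(u,v)}$ in its third term reproduces exactly \eqref{Df}; what remains is to tie $\Phi_i$ to the budget $\gamma R^{(u,v)}$.

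For that I would analyse the lower-priority classes $j>i$. By definition \eqref{ddotidSl}, $\bar{idSl}_j^{(u,v)}$ is precisely the slope at which the class-$j$ bound, computed with the post-addition flow set and the original higher-priority slopes, equals $D_j^{(u,v)}$; thus $D_j^{(u,v)}=\frac{N_j}{\bar{idSl}_j^{(u,v)}}+\frac{l^{\max}}{C}+\frac{(j-1)l^{\max}}{C-\sum_{k<j}\bar{idSl}_k^{(u,v)}}$, where $N_j=\sum_{f\in(\mathcal{F}\cup\{f^{+}\})_j^{(u,v)}}b_f$. Requiring that class $j$ keep the deadline $D_j^{(u,v)}$ after the adjustment — i.e. equating this to the new bound, which has first-term denominator $\bar{idSl}_j^{(u,v)}+\Phi_j$ and third-term denominator $C-\sum_{k<j}\bar{idSl}_k^{(u,v)}-\sum_{k=i}^{j-1}\Phi_k$ (the $\Phi_k$ with $k<i$ being zero) — makes the $l^{\max}/C$ terms cancel and, after clearing denominators, leaves an identity in which the $\Phi$'s enter only through $\sum_{k=i}^{j-1}\Phi_k$ and $\sum_{k=i}^{j}\Phi_k$. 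Grouping terms reveals this to be a quadratic in $\sum_{k=i}^{j-1}\Phi_k$ once $\sum_{k=i}^{j}\Phi_k$ is fixed, and Lemma~\ref{lemma1} both solves it and singles out the unique admissible root; that root map is the function $g$, giving $\sum_{k=i}^{j-1}\Phi_k=g\!\left(\sum_{k=i}^{j}\Phi_k\right)$ for every $i<j\le N_{\rm AVB}$.

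It remains only to anchor the recursion. Because classes below $i$ receive nothing, the total extra bandwidth spent at $(u,v)$ is $\sum_{k=i}^{N_{\rm AVB}}\Phi_k$, which by hypothesis equals $\gamma R^{(u,v)}$; this is the initial term. Iterating $g$ downward from $\sum_{k=i}^{N_{\rm AVB}}\Phi_k=\gamma R^{(u,v)}$ through $\sum_{k=i}^{N_{\rm AVB}-1}\Phi_k,\ \sum_{k=i}^{N_{\rm AVB}-2}\Phi_k,\ \dots$ down to $\sum_{k=i}^{i}\Phi_k=\Phi_i$ expresses $\Phi_i$ as a well-defined nested function of $\gamma R^{(u,v)}$, and substituting it into \eqref{Df} gives $\hat{D}_i^{(u,v)}$; the case $i=N_{\rm AVB}$ is the degenerate one in which the recursion is empty and $\Phi_i=\gamma R^{(u,v)}$ directly.

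I expect the core difficulty to be the second step: arranging the class-$j$ balance so that all the $\Phi$'s genuinely collapse into the two cumulative sums (this uses both $\Phi_k=0$ for $k<i$ and the rewriting of $D_j^{(u,v)}$ via \eqref{ddotidSl} that cancels the static part of the third term), and then showing the resulting quadratic has exactly one root in the admissible interval so that $g$ is a true function with the monotonicity the binary search in Algorithm~1 relies on — that root analysis is what I would defer to Lemma~\ref{lemma1}. A secondary point I would verify en route is that every denominator $C-\sum_{k<j}\bar{idSl}_k^{(u,v)}-\sum_{k=i}^{j-1}\Phi_k$ stays positive along the recursion, which follows from $\sum_{k=i}^{N_{\rm AVB}}\Phi_k=\gamma R^{(u,v)}\le R^{(u,v)}$ together with $\sum_{j}\bar{idSl}_j^{(u,v)}+R^{(u,v)}=idSl^{\max}\le C$ from \eqref{ResidualBand}.
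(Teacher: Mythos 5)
Your proposal is correct and follows essentially the same route as the paper: allocate the budget $\gamma\cdot R^{(u,v)}$ only to classes $i$ through $N_{\rm AVB}$, peel it down via the Lemma~\ref{lemma1} recursion $\sum_{k=i}^{j-1}\Phi_k = g\bigl(\sum_{k=i}^{j}\Phi_k\bigr)$ to isolate $\Phi_i$, and substitute $\bar{idSl}_{i}^{(u,v)}+\Phi_i$ into Eq.~(\ref{ddotidSl}) to read off $\hat{D}_{i}^{(u,v)}$. Your additional observations (that $\Phi_k=0$ for $k<i$ because the delay bound of a class depends only on its own and higher-priority slopes, the degenerate case $i=N_{\rm AVB}$, and the positivity of the denominators along the recursion) are consistent with, and slightly more explicit than, the paper's argument.
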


\begin{proof}
For each egress port $(u,v) \in r_{f^{+}}$, the extra bandwidth $\gamma \cdot R^{(u,v)}$, which is based on the already allocated bandwidth $(\bar{idSl})$ in Eq. (\ref{ddotidSl}), is used to meet the reduced local deadline for class $i$. According to Eq. (\ref{ddotidSl}), reducing the local deadline $D_{i}^{(u,v)}$ for class $i$ not only requires allocating extra bandwidth $\Phi_{i}$ to the target class $i$ but also necessitates allocating extra bandwidth $\Phi_{j}$ to all lower-priority classes $j \ (j \in [i+1, N_{\rm AVB}])$ to ensure that their existing local deadlines $D_{j}^{(u,v)}$ remain unaffected by the extra bandwidth allocated to higher-priority classes. Therefore, the extra bandwidth $\gamma \cdot R^{(u,v)}$ is divided among the classes from $i$ to $N_{\rm AVB}$, i.e., $\sum_{k=i}^{N_{\rm AVB}}\Phi_{k} = \gamma \cdot R^{(u,v)}$. The extra bandwidth $\Phi_{i}$ for class $i$ can be obtained through a recursive relationship, which iteratively derives $\sum_{k=i}^{j-1}\Phi_{k}$ from $\sum_{k=i}^{j}\Phi_{k}$ for $i < j \leq N_{\rm AVB}$, beginning with the known value of $\sum_{k=i}^{N_{\rm AVB}}\Phi_{k}$. To ensure that the existing local deadline $D_{j}^{(u,v)}$ for class $j$ is met, the recursive relationship should satisfy $\sum_{k=i}^{j-1}\Phi_{k} = g(\sum_{k=i}^{j}\Phi_{k})$, as proved in Lemma 1. Therefore, based on the extra bandwidth $\Phi_{i}$ for class $i$ obtained from the recursive relationship, the adjusted local deadline $\hat{D}_{i}^{(u,v)}$ is give in Eq. (\ref{Df}), by replacing $\bar{idSl}_{i}^{(u,v)}$ with $\bar{idSl}_{i}^{(u,v)} + \Phi_{i}$ and solving for $\hat{D}_{i}^{(u,v)}$ in Eq. (\ref{ddotidSl}).
\end{proof}

Next, we formally prove Lemma 1, which is used in the proof of Theorem 1. According to Eq. (\ref{ddotidSl}), when adjusting the local deadline \( D_i^{(u,v)} \) of class \( i \), the bandwidth allocation change for higher-priority classes may impact lower-priority class $j \ (j \in [i+1, N_{\rm AVB}])$, potentially causing its existing local deadline \( D_j^{(u,v)} \) to no longer be met. To address this issue, Lemma 1 provides a recursive relationship that must be satisfied when adjusting the local deadline $D_{i}^{(u,v)}$ in order to ensure that \( D_j^{(u,v)} \) remains exactly satisfied.

\begin{lemma}[Recursive Relationship for Adjusting Local Deadline]
	\label{lemma1}
	Let $\Phi_{k}$ denote the extra bandwidth allocated to class $k$ relative to $\bar{idSl}_{k}^{(u,v)}$. Assume that a total extra bandwidth of $\sum_{k=i}^{j-1}\Phi_k$ has already been allocated to classes $i$ through $j-1$ at egress port $(u,v)$, and that the lower-priority class $j$ consequently requires an extra bandwidth of $\Phi_j$ to exactly meet its existing local deadline $D^{(u,v)}_j$. Then, there exists a mapping from $\sum_{k=i}^{j}\Phi_{k}$ to $\sum_{k=i}^{j-1}\Phi_{k}$, given by
	\begin{equation}
	\label{solution}
	\sum_{k=i}^{j-1}\Phi_{k} = g\left(\sum_{k=i}^{j}\Phi_{k}\right) = \frac{-\xi -\sqrt{\xi^{2}-4\cdot \eta \cdot \zeta}}{2\cdot \eta},
	\end{equation}
	where 
	\begin{equation}
	\label{eta}
	\eta = 1 + \frac{\left(C - \sum_{k=1}^{j-1}\bar{idSl}_{k}^{(u,v)}\right)\sum_{f\in (\mathcal{F} \cup \{f^{+}\})_{j}^{(u,v)} }b_{f}}{(j-1)l^{\max}\cdot \bar{idSl}_{j}^{(u,v)}},
	\end{equation}
	\begin{equation}
	\label{xi}
	\xi = -\eta  \sum_{k=i}^{j}\Phi_{k} - \left(\eta-1\right) \left(C - \sum_{k=i}^{j-1}\bar{idSl}_{k}^{(u,v)}\right) - \bar{idSl}_{j}^{(u,v)},
	\end{equation}
	\begin{equation}
	\label{zeta}
	\zeta =  \left(\eta-1\right)  \left(C - \sum_{k=1}^{j-1}\bar{idSl}_{k}^{(u,v)}\right) \sum_{k=i}^{j}\Phi_{k}.
	\end{equation}
\end{lemma}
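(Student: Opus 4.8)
The plan is to collapse the single requirement ``class $j$ still exactly meets its existing local deadline $D_j^{(u,v)}$ after the reallocation'' into one quadratic equation in the unknown $P:=\sum_{k=i}^{j-1}\Phi_k$, treating $Q:=\sum_{k=i}^{j}\Phi_k$ (the total extra budget handed down from the previous recursion step, or $\gamma\cdot R^{(u,v)}$ at the top level) as a known quantity, and then to single out the physically correct root. First I would write the class-$j$ delay both before and after the adjustment using the worst-case expression behind Eq.~(\ref{WorstDelayFormula}) and Eq.~(\ref{ddotidSl}). Before adjustment, Eq.~(\ref{ddotidSl}) gives $D_j^{(u,v)}-\tfrac{l^{\max}}{C}=\tfrac{B_j}{\bar{idSl}_j^{(u,v)}}+\tfrac{(j-1)l^{\max}}{C-\sum_{k=1}^{j-1}\bar{idSl}_k^{(u,v)}}$, writing $B_j:=\sum_{f\in(\mathcal{F}\cup\{f^{+}\})_j^{(u,v)}}b_f$. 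After adjustment, only classes $i,\dots,j$ receive extra bandwidth, so class $j$'s service becomes $\bar{idSl}_j^{(u,v)}+\Phi_j$ while its interference term becomes $\tfrac{(j-1)l^{\max}}{C-\sum_{k=1}^{j-1}\bar{idSl}_k^{(u,v)}-P}$, because classes $i,\dots,j-1$ are precisely the classes below $j$ that are modified. Imposing that the adjusted delay again equals $D_j^{(u,v)}$ and subtracting the pre-adjustment identity cancels $D_j^{(u,v)}$ and $l^{\max}/C$ and, after combining the fractions on each side, leaves
\begin{equation*}
\frac{B_j\,\Phi_j}{\bar{idSl}_j^{(u,v)}\bigl(\bar{idSl}_j^{(u,v)}+\Phi_j\bigr)}
=\frac{(j-1)l^{\max}\,P}{\bigl(C-\sum_{k=1}^{j-1}\bar{idSl}_k^{(u,v)}\bigr)\bigl(C-\sum_{k=1}^{j-1}\bar{idSl}_k^{(u,v)}-P\bigr)}.
\end{equation*}

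Next I would substitute $\Phi_j=Q-P$, cross-multiply, and collect powers of $P$; the result is a polynomial of degree at most two in $P$. Dividing through by $(j-1)l^{\max}\,\bar{idSl}_j^{(u,v)}$ should turn the leading coefficient into exactly $\eta$ of Eq.~(\ref{eta}), the linear coefficient into $\xi$ of Eq.~(\ref{xi}), and the constant into $\zeta$ of Eq.~(\ref{zeta}); verifying these three identifications is the computational heart of the argument. Solving the quadratic then yields $P=\bigl(-\xi\pm\sqrt{\xi^{2}-4\eta\zeta}\bigr)/(2\eta)$, and the remaining task is to justify the minus sign of Eq.~(\ref{solution}).

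For the root selection I would argue as follows. First, $\eta>1>0$, since its second term is a quotient of strictly positive quantities (using $\sum_{k=1}^{j-1}\bar{idSl}_k^{(u,v)}<C$), so the equation is genuinely quadratic. Second, the left-hand side of the displayed identity is a strictly increasing function of $\Phi_j\ge 0$ vanishing at $\Phi_j=0$, and its right-hand side is a strictly increasing function of $P\in[0,\,C-\sum_{k=1}^{j-1}\bar{idSl}_k^{(u,v)})$ vanishing at $P=0$; hence each admissible $P$ determines a unique $\Phi_j=\Phi_j(P)\ge 0$, so $Q=P+\Phi_j(P)$ is strictly increasing in $P$ and $Q=0\iff P=0$. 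Therefore the physically meaningful map $Q\mapsto P$ is a single increasing branch through the origin. Since at $Q=0$ the quadratic degenerates to $\eta P^{2}+\xi P=0$ with roots $0$ and $-\xi/\eta>0$ (note $\xi<0$), the branch through the origin is the smaller root, i.e.\ $P=\bigl(-\xi-\sqrt{\xi^{2}-4\eta\zeta}\bigr)/(2\eta)=g(Q)$, and the discriminant is $\ge 0$ on the feasible range precisely because this solution exists. I expect the main obstacle to be the coefficient bookkeeping in the collection step---tracking the repeated occurrences of $C-\sum_{k=1}^{j-1}\bar{idSl}_k^{(u,v)}$, $\bar{idSl}_j^{(u,v)}$, $Q$ and $P$ so that the collected terms land exactly on the stated closed forms $\eta,\xi,\zeta$---with the root selection being a secondary point best settled through the monotone-bijection argument above rather than ad hoc sign checks on the two roots.
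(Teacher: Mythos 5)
Your proposal follows essentially the same route as the paper's proof: impose that class $j$'s existing local deadline is exactly preserved, eliminate $\Phi_j$ via $\Phi_j=\sum_{k=i}^{j}\Phi_k-\sum_{k=i}^{j-1}\Phi_k$, reduce to the quadratic $\eta P^2+\xi P+\zeta=0$ in $P=\sum_{k=i}^{j-1}\Phi_k$, and select the root lying in $[0,\sum_{k=i}^{j}\Phi_k]$ (the paper does this last step directly from the endpoint signs $f(0)=\zeta\ge 0$ and $f(\sum_{k=i}^{j}\Phi_k)=-\bar{idSl}_j^{(u,v)}\cdot\sum_{k=i}^{j}\Phi_k\le 0$, whereas you track the branch through the origin, but both arguments are sound). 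Your coefficient identifications check out, so the proposal is correct.
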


\begin{proof}
The known $\sum_{k=i}^{j}\Phi_{k}$ can be split into two parts, $\sum_{k=i}^{j-1}\Phi_{k}$ and $\Phi_{j}$, with $\sum_{k=i}^{j}\Phi_{k} = \sum_{k=i}^{j-1}\Phi_{k} + \Phi_{j}$. We first determine the relationship between $\sum_{k=i}^{j-1}\Phi_{k}$ and $\Phi_{j}$ to meet the existing local deadline $D_{j}^{(u,v)}$ for low-prority class $j$. When the extra bandwidth $\sum_{k=i}^{j-1}\Phi_{k}$ has been allocated to classes $i$ through $j-1$, extra bandwidth $\Phi_{j}$ is required for class $j$ to ensure that the existing local deadline $D_{j}^{(u,v)}$ is satisfied. The new bandwidth allocation for class $j$ is obtained by replacing $\sum_{k=1}^{j-1}\bar{idSl}_{k}^{(u,v)}$ with $\sum_{k=1}^{j-1}\bar{idSl}_{k}^{(u,v)} + \sum_{k=i}^{j-1}\Phi_{k}$ in Eq. (\ref{ddotidSl}). Since $\Phi_{j}$ represents the difference between the new bandwidth allocation and the already allocated bandwidth $\bar{idSl}_{j}$, the relationship between $\sum_{k=i}^{j-1}\Phi_{k}$ and $\Phi_{j}$ required to satisfy the existing local deadline $D_{j}^{(u,v)}$ can be expressed as
\begin{equation}
\label{idSlAdd}
\begin{aligned}
\Phi_{j} = \frac{\sum_{f\in (\mathcal{F} \cup \{f^{+}\})_{j}^{(u,v)} }b_{f}}{D_{j}^{(u,v)}-\frac{l^{\max}}{C} - \frac{(j-1)l^{\max}}{C-\left(\sum_{k=1}^{j-1}\bar{idSl}_{k}^{(u,v)} + \sum_{k=i}^{j-1}\Phi_{k}\right)}} - \bar{idSl}_{j}^{(u,v)}.
\end{aligned}
\end{equation}

We then derive the relationship between $\sum_{k=i}^{j-1}\Phi_{k}$ and $\sum_{k=i}^{j}\Phi_{k}$. We use $\sum_{k=i}^{j}\Phi_{k} = \sum_{k=i}^{j-1}\Phi_{k} + \Phi_{j}$ and combine it with the relationship between $\sum_{k=i}^{j-1}\Phi_{k}$ and $\Phi_{j}$ from Eq. (\ref{idSlAdd}) to eliminate $\Phi_{j}$. Thus, we can express the relationship as
\begin{equation}
\sum_{k=i}^{j}\Phi_{k} - \sum_{k=i}^{j-1}\Phi_{k} + \bar{idSl}_{j}^{(u,v)} = \frac{\sum_{f\in (\mathcal{F} \cup \{f^{+}\})_{j}^{(u,v)} }b_{f}}{D_{j}^{(u,v)}-\frac{l^{\max}}{C} - \frac{(j-1)l^{\max}}{C-\left(\sum_{k=1}^{j-1}\bar{idSl}_{k}^{(u,v)} + \sum_{k=i}^{j-1}\Phi_{k}\right)}}.
\end{equation}
Additionally, to reduce the number of symbols used, we eliminate $D_{j}^{(u,v)}$ by solving for it using Eq. (\ref{ddotidSl}) and substituting it accordingly.
This gives the expression as
\begin{equation}
\begin{aligned}
\label{Eq22}
&\sum_{k=i}^{j}\Phi_{k} - \sum_{k=i}^{j-1}\Phi_{k} + \bar{idSl}_{j}^{(u,v)}= \\
&\frac{\sum_{f\in (\mathcal{F} \cup \{f^{+}\})_{j}^{(u,v)} }b_{f}}{\frac{\sum_{f\in (\mathcal{F} \cup \{f^{+}\})_{j}^{(u,v)} }b_{f} }{\bar{idSl}_{j}^{(u,v)}}+\frac{(j-1)l^{\max}}{C-\sum_{k=1}^{j-1}\bar{idSl}_{k}^{(u,v)}} - \frac{(j-1)l^{\max}}{C-\left(\sum_{k=1}^{j-1}\bar{idSl}_{k}^{(u,v)} + \sum_{k=i}^{j-1}\Phi_{k}\right)}}.\\
\end{aligned}
\end{equation}
In Eq. (\ref{Eq22}), $\sum_{k=i}^{j-1}\Phi_{k}$ appears on both sides of the equation. By rearranging the terms, we derive a polynomial equation involving $\sum_{k=i}^{j-1}\Phi_{k}$. Thus, the relationship between $\sum_{k=i}^{j-1}\Phi_{k}$ and $\sum_{k=i}^{j}\Phi_{k}$ to meet the existing local deadline $D_{j}^{(u,v)}$ for low-priority class $j$ is
\begin{equation}
\begin{aligned}
\label{EqForPhi}
\eta \left(\sum_{k=i}^{j-1}\Phi_{k}\right)^{2} + \xi \left(\sum_{k=i}^{j-1}\Phi_{k}\right) + \zeta = 0,
\end{aligned}
\end{equation} 
where the coefficients $\eta$, $\xi$, and $\zeta$ associated with  $\sum_{k=i}^{j}\Phi_{k}$ are given in Eqs. (\ref{eta}), (\ref{xi}) and (\ref{zeta}), respectively. Then, we attempt to find the solution for $\sum_{k=i}^{j-1}\Phi_{k}$ in Eq. (\ref{EqForPhi}) within the feasible interval $[0, \sum_{k=i}^{j}\Phi_{k}]$, since $\sum_{k=i}^{j-1}\Phi_{k}$ cannot exceed $\sum_{k=i}^{j}\Phi_{k}$. We let $f(x) = \eta x^{2} + \xi x + \zeta$, and observe that $\eta > 0$, $f(0) = \zeta \geq 0$, and $f(\sum_{k=i}^{j}\Phi_{k}) = - \bar{idSl}_{j}^{(u,v)} \cdot \sum_{k=i}^{j}\Phi_{k}\leq 0$. Therefore, of the two solutions to Eq. (\ref{EqForPhi}), one lies within the interval $[0, \sum_{k=i}^{j}\Phi_{k}]$ and the other within $[\sum_{k=i}^{j}\Phi_{k}, \infty)$. Thus, Eq. (\ref{EqForPhi}) yields a unique solution within the feasible interval $[0, \sum_{k=i}^{j}\Phi_{k}]$, representing the unique recursive relationship to obtain $\sum_{k=i}^{j-1}\Phi_{k}$ from the known $\sum_{k=i}^{j}\Phi_{k}$, as given by Eq. (\ref{solution}).
\end{proof}

\section{Experimental Comparison and Analysis}
\label{Experiment}
In this section, we evaluate the performance of our method in synthetic test cases and realistic test cases. The experiments are implemented in C++ and run on an Intel(R) Core(TM) i5-1240P 64bit CPU @1.7GHz with 16GB of RAM.

\subsection{Synthetic Test Cases}

\subsubsection{Experiment Case Setup}
\textcolor{black}{The synthetic test cases are generated using different network topologies and request flow sets.} Network topologies are created using the Erdős-Rényi (ER) model through NetworkX \cite{hagberg2008}, a Python library dedicated to creating complex networks. The ER model independently includes each possible link connection between pairs of switches with a probability $p$. We adopt link connection probabilities $p$ of $\{0.4, 0.5, 0.6, 0.7, 0.8 \}$. Additionally, we use various topology scales of $\{$10SW50ES, 14SW70ES, 18SW90ES, 22SW110ES$\}$. An example of the 22SW110ES topology, which includes 22 switches (SWs) and 110 end systems (ESs), with a link connection probability of $p=0.6$, is shown in Fig. \ref{Topology}. The transmission rate of physical links is set to $C=$100Mbit/s. Requested flow sets are generated in quantities of $\{400,500,600,700,800\}$. Packet sizes for these flows range from 64 to 1518 bytes. To express the adaptability of the method across different periods and deadlines, flow periods and deadlines are set independently, ranging from 2ms to 9ms for each. Source and destination nodes are randomly selected from the end-systems, ensuring that the source and destination of a flow are not connected to the same switch. The number of AVB classes $N_{\rm AVB}$ is chosen from $\{1,2,4,8\}$, with flows distributed evenly across different classes according to their deadlines. The upper limit of bandwidth allocated to all AVB classes at the egress port, denoted as $idSl^{\max}$, is set to $0.75C$. The maximum frame size for BE traffic is set to 1518 bytes. The parameter $k$ for the k-shortest candidate paths is set to 3. In order to evaluate the performance of the admission control method, we set the offline flows to zero so that all flows are dynamically admitted by the method. For each class level, the initial local deadline is uniformly set as the ratio of the maximum end-to-end deadline to the minimum route length within the class. For example, when there are two AVB classes, the initial local deadlines are 5/2 = 2.5 ms and 9/2 = 4.5 ms, respectively, where 5 ms and 9 ms are the maximum end-to-end deadlines for class 1 and class 2, and 2 is the minimum route length. The initial bandwidth allocations are uniformly set to zero.
\begin{figure}[!t]
	\centerline{\includegraphics[width=\columnwidth]{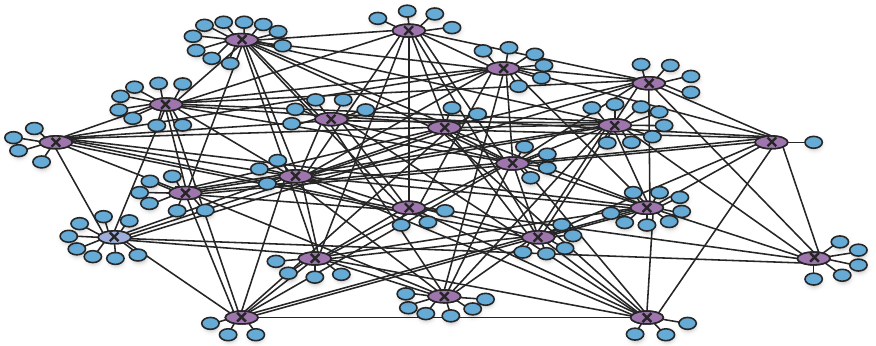}}
	\caption{An example of an Erdős-Rényi (ER) network with 22 switches, 110 end-systems, and a link connection probability of 0.6.}
	\label{Topology}
\end{figure}
\begin{figure}[!t]
	\centerline{\includegraphics[width=\columnwidth]{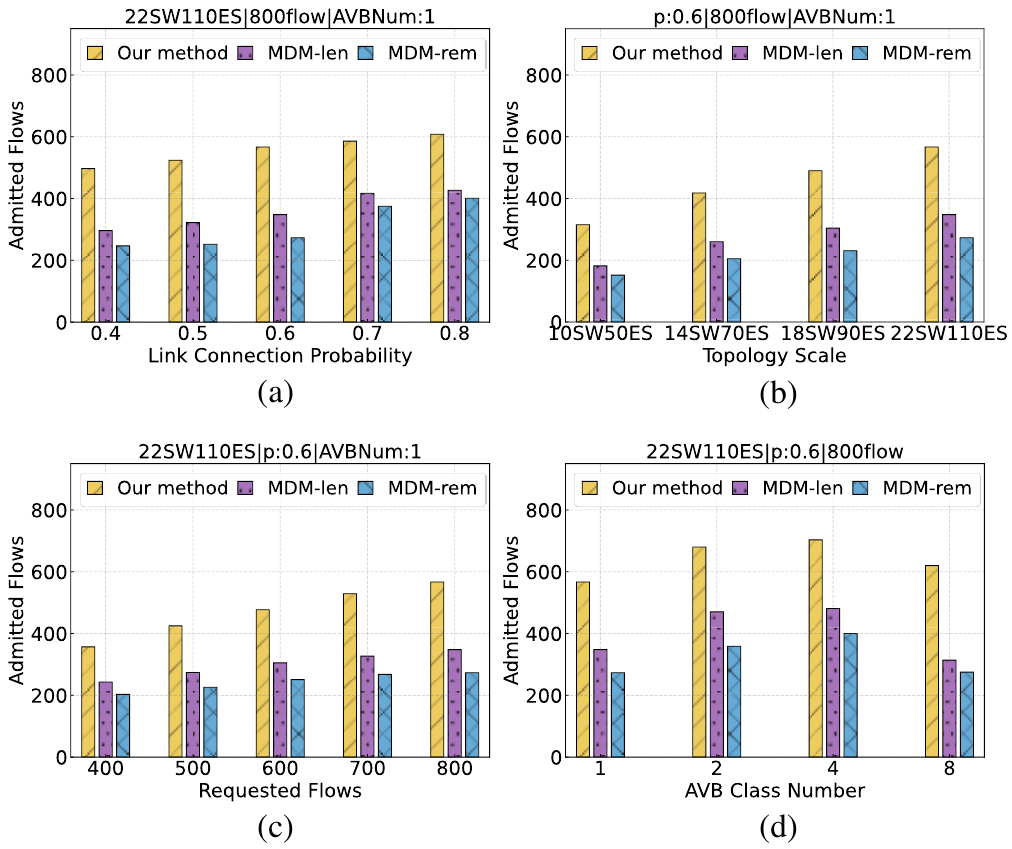}}
	\caption{Comparison of our method with the state-of-the-art in terms of admitted flows.}
	\label{Lisa}
\end{figure}
\subsubsection{Comparison with the State-of-the-Art}
\label{Exp1Section}
To verify the high utilization, we compare our method with the state-of-the-art, namely MDM-len and MDM-rem \cite{maile2022}, which are the online admission control methods in TSN/CBS architecture. These methods ensure real-time requirements by setting fixed delay budgets on queues and employing DCLC routing at runtime, with MDM-len using a cost function based on path length and MDM-rem using a cost function based on the remaining link rate. The evaluations compare the number of admitted flows across experiment cases varying in link connection probabilities, topology scales, requested flows, and AVB class number. Fig. \ref{Lisa}(a)-(d) show the comparative results. Fig. \ref{Lisa}(a) shows the results for different link connection probabilities $p$. Our method exhibits greater improvement over the MDM method when the link connection probability $p$ is smaller. This is because, at a smaller link connection probability $p$, the availability of network resources is lower. Our method can better overcome these limitations compared to fixed delay budgets by adaptively assigning and adjusting local deadlines,  thus effectively increasing the flow admission capacity. Fig. \ref{Lisa}(b) indicates that as the topology scale expands, the number of admitted flows increases for all methods. This growth is due to expanding available network resources with increased physical links. Fig. \ref{Lisa}(c) shows that as the number of requested flows increases, the admission ratio, which represents the percentage of admitted flows to requested flows, decreases for all methods due to the limited available network resources. Fig. \ref{Lisa}(d) illustrates that as the number of AVB classes increases, the number of admitted flows first increases and then decreases. This trend suggests that assigning too many or too few classes might not be optimal, making class assignments an interesting topic for future discussion. In summary, across these experiment cases, our method shows an average improvement of 59.4\% over the MDM-len method and 95.2\% over the MDM-rem method in terms of admitted flows.

\subsubsection{Comparison of Local Deadline Adjustment Strategies}
\label{ExpStra}
To validate the effectiveness of our local deadline adjustment strategy proposed in Section \ref{Strategy}, we compare it with three baseline strategies from \cite{zotero-5194}: equal partition (EP), load-based partition (LP), and available bandwidth-based Partition (ABP). The coefficients $\kappa^{(u,v)}$ for $(u,v) \in r_{f^{+}}$ in these strategies are listed in Table \ref{CompStraTable}. For the comparison, we adopt the adjusted local deadline $\hat{D}^{(u,v)}_{i} = D^{(u,v)}_{i} - (\sum_{(u',v') \in r_{f^{+}}}D^{(u',v')}_{i} - D_{f^{+}}^{\rm E2E}) \cdot \kappa^{(u,v)}$ for the baseline strategies in place of the output from Algorithm 1, while the other parts follow the online admission control framework outlined in Section \ref{FlowAdd}. The evaluations compare the number of admitted flows across experiment cases, which are consistent with those used in Section \ref{Exp1Section}. Fig. \ref{EAPA}(a)-(d) displays the experimental results, indicating that our strategy achieves an average increase in the number of admitted flows by 40.0\%, 32.0\%, and 34.7\% compared to the EP, LP, and ABP strategies, respectively. The EP strategy leads to the lowest number of admitted flows because it uniformly reduces local deadlines to meet the new flow deadline without considering the specific characteristics of the existing network configuration. The LP and ABP outperform EP because they indirectly incorporate residual bandwidth balancing into their strategies by utilizing the proportions of traffic load and available residual bandwidth, respectively. Our strategy outperforms others because it iteratively finds an adjustment that ensures the extra bandwidths are allocated according to the proportion of available residual bandwidths between egress ports. This approach directly balances the residual bandwidth across egress ports, preventing some egress ports from becoming bottlenecks too early due to resource exhaustion.
\begin{table}[!t]
	\begin{footnotesize}
		\begin{threeparttable}
			\centering
			\caption{Coefficients of Baseline Strategies}
			\label{CompStraTable}
			\setlength{\tabcolsep}{3pt}
			\renewcommand{\arraystretch}{0.2}
			\begin{tabular}{>{\centering\arraybackslash}m{30pt} >{\centering\arraybackslash}m{19pt} >{\centering\arraybackslash}m{95pt} >{\centering\arraybackslash}m{65pt}}
				\hline
				\\
				Strategy&EP&LP&ABP\\
				\hline		
				\\
				${{\kappa^{(u,v)}}}^{1}$&${\frac{1}{|r_{f^{+}}|}}^{2}$&${\frac{\sum_{(u',v') \in r_{f^{+}}} B^{(u',v')} -B^{(u,v)} }{\left(|r_{f^{+}}| - 1\right)\sum_{(u',v')\in r_{f^{+}}}B^{(u',v')}}}^{3}$&${\frac{R^{(u,v)}}{\sum_{(u',v') \in r_{f^{+}}}R^{(u',v')}}}^{4}$\\
				\\
				\hline
			\end{tabular}
			\begin{tablenotes}		
				\item[1] 
				$\kappa^{(u,v)}$ denotes the proportion of the reduction in the existing local deadline at egress port $(u,v)$ relative to the entire route $r_{f^{+}}$, where $\hat{D}^{(u,v)}_{i} = D^{(u,v)}_{i} - (\sum_{(u',v') \in r_{f^{+}}}D^{(u',v')}_{i} - D_{f^{+}}^{\rm E2E}) \cdot \kappa^{(u,v)}$.
				
				\item[2] $|r_{f^{+}}|$ denotes the length of route $r_{f^{+}}$. 
				
				\item[3] $B^{(u,v)} = \sum_{f \in  (\mathcal{F} \cup \{f^{+}\})^{(u,v)}}\rho_{f}$ denotes the load at egress port $(u,v)$. 
				
				\item[4] $R^{(u,v)}$ denotes the residual bandwidth at egress port $(u,v)$ as Eq. (\ref{ResidualBand}).
			\end{tablenotes}
		\end{threeparttable}
	\end{footnotesize}
\end{table}
\begin{figure}[!t]
	\centerline{\includegraphics[width=\columnwidth]{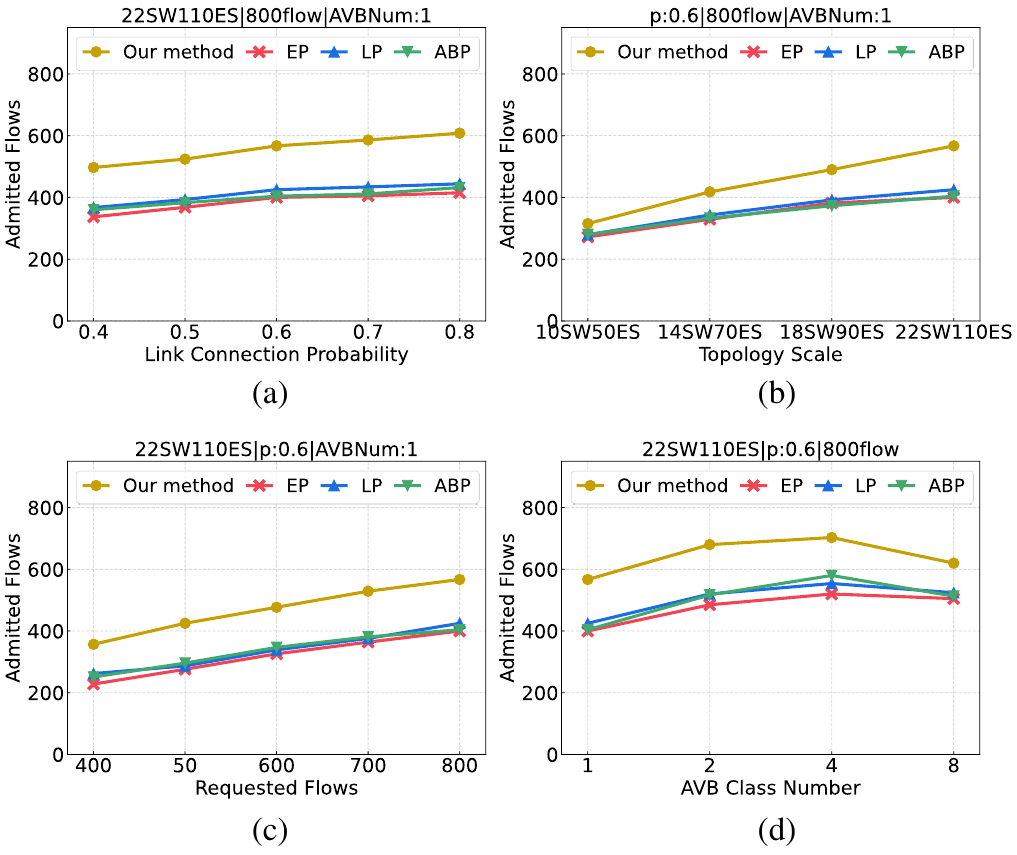}}
	\caption{Comparison of our local deadline adjustment strategy in Section \ref{DelayAdjust} with the baseline strategies in terms of admitted flows.}
	\label{EAPA}
\end{figure}

\subsubsection{Execution Time}
\begin{figure}[!t]
	\centerline{\includegraphics[width=\columnwidth]{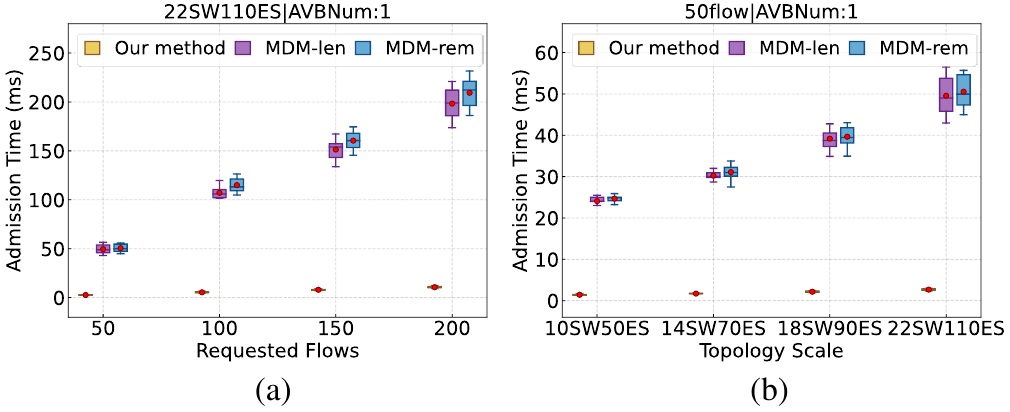}}
	\caption{Comparison of our method with the state-of-the-art in terms of admission time.}
	\label{RunTime}
\end{figure}
To confirm the efficiency and scalability, we compare our method with the state-of-the-art MDM-len and MDM-rem\cite{maile2022}. The evaluations focus on the admission time across experimental cases with varying requested flows and topology sizes. Fig. \ref{RunTime}(a)-(b) demonstrate the experimental results using box plots, where each box corresponds to 10 cases and the red marker represents the mean value. Fig. \ref{RunTime}(a) illustrates that the admission time for each method increases linearly with the number of requested flows. Specifically, for our method, the average admission times for admitting 50, 100, 150, and 200 flows are 2.7ms, 5.4ms, 8.0ms, and 10.6ms, respectively. Compared to the MDM methods, our method exhibits a markedly slower growth rate, indicating a significant decrease in the per-flow admission time. Fig. \ref{RunTime}(b) displays the admission time for three methods across various topology scales. For our method, the average admission times for admitting 50 flows are 1.4ms, 1.7ms, 2.2ms, and 2.7ms, respectively. Compared to the MDM methods, our method demonstrates a smaller rise in admission time as the topology scale grows. In summary, for the presented cases, our method achieves an average reduction in admission time of 94.5\% and 94.7\% when compared to the MDM-len and MDM-rem methods, respectively. The per-flow admission time for large-scale scenarios involving 22SW110ES achieves less than 100$\mu$s. Our method significantly reduces admission time for two main reasons: Firstly, we shift the complexity of online routing to the offline phase. Secondly, although our method involves online dynamic local deadline adjustments, it utilizes analytical results to reduce the solution space and accelerate the solving process without introducing significant computational overhead. These improvements collectively lead to a substantial reduction in admission time.

\subsection{Realistic Test Cases}
\begin{figure}[!t]
	\centerline{\includegraphics[width=\columnwidth]{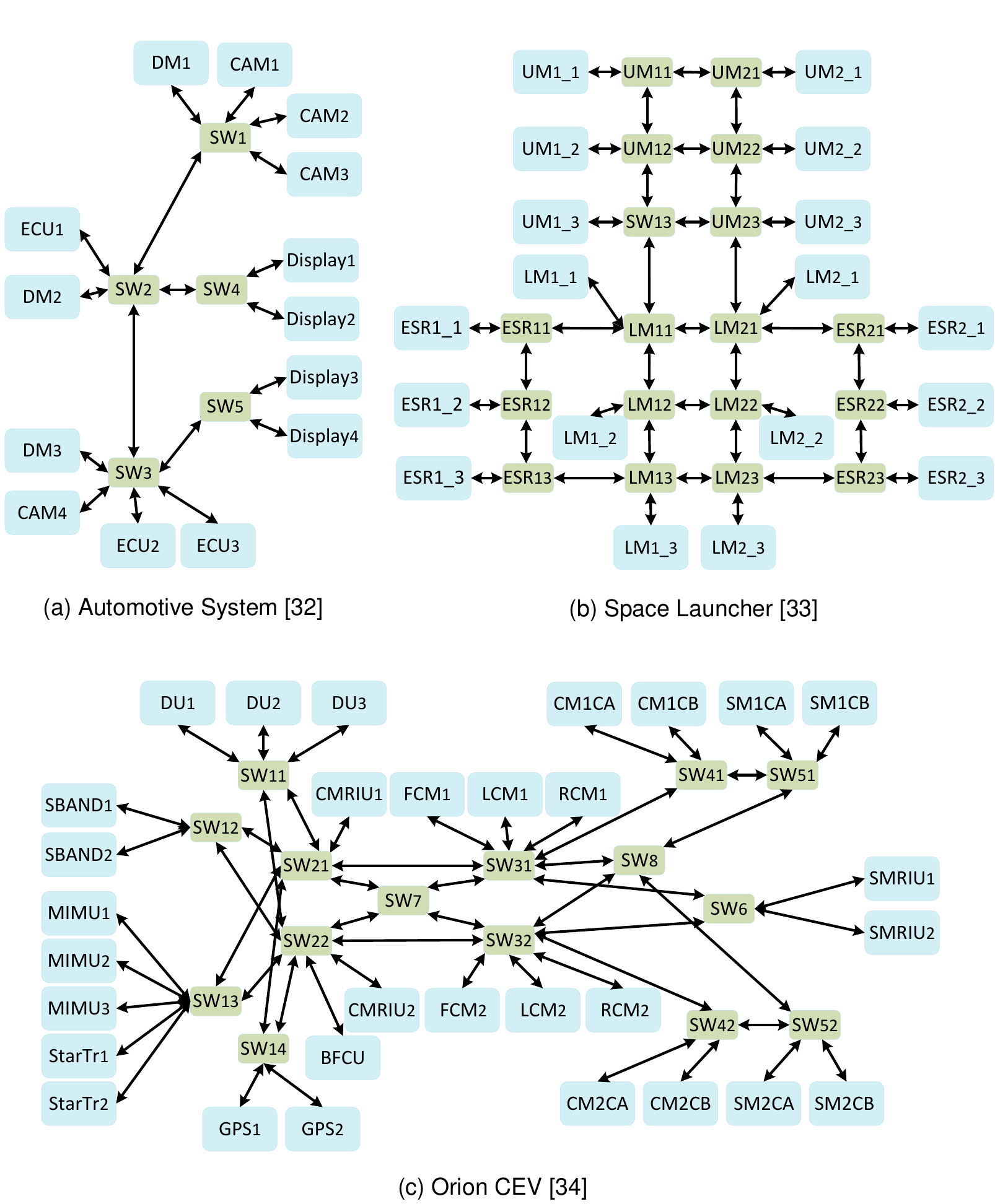}}
	\caption{Topologies of three realistic test cases.}
	\label{TopoReal}
\end{figure}
\subsubsection{Comparison in Realistic Test Cases}
We compare different methods across three realistic test cases. The methods consist of our method, MDM-len and MDM-rem (described in Section \ref{Exp1Section}), and EP, LP, and ABP (described in Section \ref{ExpStra}). The three realistic test cases are the Automotive System \cite{migge2018}, Space Launcher \cite{keller2022}, and Orion CEV \cite{tamas-selicean2014a}. Their topologies and network configuration characteristics are presented in Fig. \ref{TopoReal} and Table \ref{OverviewTable}. The attributes of the requested flow sets used for admission control are listed in Tables \ref{AutoTable}, \ref{SpaceTable}, and \ref{OrionTable}, which are derived from real-world cases. The comparison metrics include admission capacity, first rejection, and per-flow admission time. Admission capacity refers to the number of flows successfully admitted when processing a large set of request flows, reflecting the maximum bandwidth utilization of the method. The first rejection indicates the index of the first flow request rejected during the admission process, with a higher value suggesting that the method can better adapt to varying traffic requirements. Per-flow admission time refers to the average computation time required to admit each flow, reflecting the responsiveness to flow requests. The comparison results are summarized in Table \ref{tab:all_cases}.
\begin{table}[!t]
	\centering
	\renewcommand{\arraystretch}{1} 
	\caption{Overview of Network Configuration Characteristics}
	\begin{footnotesize}
		\begin{tabular}{|>{\centering\arraybackslash}m{1.65cm}|c|c|c|>{\centering\arraybackslash}m{2.5cm}|}
			\hline
			Case              & Nodes & Switches & Links & Link Speeds \\ \hline
			Automotive System \cite{migge2018} & 14    & 5        & 18    & \makecell{1Gbit/s (DM3$\leftrightarrow$ SW3) \\ 100Mbit/s(others)}\\ \hline
			Space Launcher \cite{keller2022}    & 18    & 18       & 24   & 100Mbit/s \\ \hline
			Orion CEV    \cite{tamas-selicean2014a}     & 31    & 15       & 55   & 1Gbit/s\\ \hline
		\end{tabular}
	\end{footnotesize}
	\label{OverviewTable}
\end{table}	

First, our method is compared with MDM-len and MDM-rem methods. In terms of admission capacity, our method shows improvements across all three test cases, with a remarkable increase of 70\% observed in the Space Launcher case. For the first rejection, our method also demonstrates improvement, with performance increasing by up to 75\% in the Orion CEV case. Regarding per-flow admission time, our method reduces the time by over 79\% compared to MDM-len. Additionally, even in large-scale scenarios, the per-flow admission time of our method is less than 100 $\mu s$. These results indicate that, with ATS support, our method dynamically adjusts local deadlines to balance residual bandwidth, thereby optimizing resource utilization and enhancing adaptability. Furthermore, our method avoids online routing and reduces computational overhead during the adjustment process by leveraging analytical results to reduce the solution space, thereby improving admission time.
\begin{table}[!t]
	\centering
	\begin{footnotesize}
		\caption{Flows in the Automotive System Case}
		\begin{tabular}{|>{\centering\arraybackslash}m{0.5cm}|>{\centering\arraybackslash}m{1.1cm}|>{\centering\arraybackslash}m{0.8cm}|>{\centering\arraybackslash}m{0.85cm}|>{\centering\arraybackslash}m{0.65cm}|>{\centering\arraybackslash}m{0.9cm}|>{\centering\arraybackslash}m{1.05cm}|}
			\hline
			Class              & \makecell{Size\\(B)}  & \makecell{Period\\(us)} & \makecell{Deadline\\(us)} & Src & Dst & Proportion \\ \hline
			A                  & 256-1024 & 10000      & 10000        & ECU    & DM          & 40.8\%     \\ \hline
			B                  & 128or256 & 10000      & 10000        & CAM    & DM          & 29.6\%     \\ \hline
			\multirow{6}{*}{C} & 1446     & 1100       & 30000        & CAM1   & DM1         & 3.7\%      \\ \cline{2-7} 
			& 1446     & 1100       & 30000        & CAM2   & DM1         & 3.7\%      \\ \cline{2-7} 
			& 1446     & 1100       & 30000        & CAM3   & DM2         & 3.7\%      \\ \cline{2-7} 
			& 1446     & 1100       & 30000        & CAM4   & DM2         & 3.7\%      \\ \cline{2-7} 
			& 1446     & 1100       & 30000        & CAM4   & Display1    & 3.7\%      \\ \cline{2-7} 
			& 1446     & 1100       & 30000        & CAM4   & Display2    & 3.7\%      \\ \hline
			D                  & 1446     & 1100       & 30000        & CAM4   & DM3         & 7.4\%      \\ \hline
		\end{tabular}
		\label{AutoTable}
	\end{footnotesize}
\end{table}

\begin{table}[!t]
	\centering
	\begin{footnotesize}
		\caption{Flows in the Space Launcher Case}
		\begin{tabular}{|c|c|c|c|c|}
			\hline
			Class & Size(B)  & Period(us) & Deadline(us) & Proportion \\ \hline
			A     & 256-1024 & 10000      & 10000        & 21\%       \\ \hline
			B     & 128or256 & 10000      & 10000        & 78\%       \\ \hline
			C     & 1446     & 1100       & 30000        & 1\%        \\ \hline
		\end{tabular}
		\label{SpaceTable}
	\end{footnotesize}
\end{table}

\begin{table}[!t]
	\centering
	\begin{footnotesize}
		\caption{Flows in the Orion CEV Case}
		\begin{tabular}{|c|c|c|c|c|}
			\hline
			Class & Size(B) & Period (us) & Deadline (us) & \begin{tabular}[c]{@{}c@{}} Proportion\end{tabular} \\ \hline
			\multirow{3}{*}{A} & 64-1518& 4000   & 6800-7300   & 6\%  \\ \cline{2-5} 
			&64-1518& 8000   & 8700-15000   & 14\%  \\ \cline{2-5} 
			&64-1518& 16000   & 16000-30000   & 9\%  \\ \hline
			\multirow{3}{*}{B} &64-1518& 16000  & 17000-32000  & 13\% \\ \cline{2-5} 
			&64-1518& 32000   & 34000-62000   & 14\%  \\ \cline{2-5} 
			&64-1518& 64000   & 67000-68000   & 2\%  \\ \hline
			\multirow{2}{*}{C} &64-1518& 64000  & 70000-130000  & 24\% \\ \cline{2-5} 
			&64-1518& 128000 & 170000-190000 & 3\% \\ \hline
			D                  &64-1518& 128000      & 170000-370000      & 15\%    \\ \hline
		\end{tabular}
		\label{OrionTable}
	\end{footnotesize}
\end{table}	

\begin{table*}[!ht]
	\centering
	\renewcommand{\arraystretch}{2.26} 
	\setlength{\tabcolsep}{8pt} 
	\begin{footnotesize}
		\begin{threeparttable}
			\caption{Comparison in Three Realistic Test Cases}
			\begin{tabular}{>{\centering\arraybackslash}m{2.4cm}|>{\centering\arraybackslash}m{3.2cm}|>{\centering\arraybackslash}m{1.55cm}|>{\centering\arraybackslash}m{1.4cm}|>{\centering\arraybackslash}m{1.4cm}|>{\centering\arraybackslash}m{1.2cm}|>{\centering\arraybackslash}m{1.2cm}|>{\centering\arraybackslash}m{1.2cm}}
				\Xhline{1pt}
				\textbf{Case} & \textbf{Metric}           & \textbf{Our Method} & \textbf{MDM-len} & \textbf{MDM-rem} & \textbf{EP} & \textbf{LP} & \textbf{ABP} \\ \Xhline{1pt}
				\multirow{3}{*}{\makecell{\textbf{Automotive System} \\ \cite{migge2018}}} 
				& Admission Capacity$^{1}$      & \makecell{217 \\ (+28\%)$^{4}$}       & \makecell{170 \\  (0\%)}      & \makecell{170 \\ (0\%)}      & \makecell{170 \\ (0\%)}  & \makecell{194 \\ (+14\%)}  & \makecell{126 \\ (-26\%)}  \\ \cline{2-8} 
				& First Rejection$^{2}$            & \makecell{48 \\ (+2\%)}       &\makecell{ 47  \\ (0\%) }    &\makecell{47 \\ (0\%)}      & \makecell{48\\ (+2\%)} & \makecell{48 \\(+2\%)} & \makecell{6 \\(-87\%)} \\ \cline{2-8} 
				& Per-Flow Admission Time$^{3}$ & \makecell{13.3$\mu s$ \\ (-90\%)}       & \makecell{137.7$\mu s$ \\ (0\%)}        &\makecell{ 90.4$\mu s$ \\ (-34\%) }      & \makecell{11.1$\mu s$ \\ (-92\%)}   & \makecell{9.67$\mu s$ \\ (-93\%)} & \makecell{11.1$\mu s$ \\ (-92\%)} \\ \hline \hline
				\multirow{3}{*}{\makecell{\textbf{Space Launcher}\\ \cite{keller2022}}} 
				& Admission Capacity    & \makecell{701  \\ (+70\%)}       & \makecell{412  \\ (0\%)  }     & \makecell{418\\ (+1\%) }      & \makecell{557 \\ (+35\%)}  & \makecell{540 \\ (+31\%) } & \makecell{383 \\ (-7\%)}  \\ \cline{2-8} 
				& First Rejection          & \makecell{218  \\ (+37\%)   }     & \makecell{159    \\ (0\%) }    &\makecell{ 121 \\ (-24\%)  }      & \makecell{9\\ (-94\%) }& \makecell{9 \\ (-94\%)}&\makecell{ 9\\ (-94\%) }\\ \cline{2-8} 
				& Per-Flow Admission Time & \makecell{70.0 $\mu s$ \\ (-88\%)}       & \makecell{565.4$\mu s$ \\ (0\%)}    & \makecell{581.6$\mu s$ \\ (+2\%)}        &  \makecell{30.1$\mu s$ \\ (-95\%)}& \makecell{56.7$\mu s$ \\ (-90\%)}  & \makecell{69.5$\mu s$ \\ (-88\%)} \\ \hline \hline
				\multirow{3}{*}{\makecell{\textbf{Orion CEV} \\ \cite{tamas-selicean2014a}}} 
				& Admission Capacity      & \makecell{5653 \\ (+18\%)    }    &\makecell{ 4789  \\ (0\%)  }    &\makecell{ 4293  \\ (-10\%) }     & \makecell{2000\\ (-58\%) }& \makecell{2126 \\ (-56\%)} & \makecell{1969 \\ (-59\%)}\\ \cline{2-8} 
				& First Rejection           &\makecell{ 3588 \\ (+75\%)}       &\makecell{ 2047    \\ (0\%)}   & \makecell{1449 \\ (-29\%)}      & \makecell{21\\ (-99\%) }& \makecell{21 \\ (-99\%)}& \makecell{21\\ (-99\%)} \\ \cline{2-8} 
				& Per-Flow Admission Time & \makecell{78.3$\mu s$ \\ (-79\%)}  & \makecell{370.1$\mu s$ \\ (0\%) }       & \makecell{420.2$\mu s$ \\ (+14\%)}       & \makecell{24.5$\mu s$   \\ (-93\%)}      &\makecell{ 25.7$\mu s$ \\ (-93\%)}& \makecell{32.2$\mu s$ \\ (-91\%)}  \\ \Xhline{1pt}
			\end{tabular}
			\label{tab:all_cases}
			\begin{tablenotes}		
				\item[1] Admission capacity is evaluated using a request flow set of 10000 flows, representing a large number of admission requests.
				
				\item[2] First rejection refers to the index of the first flow request rejected during the admission process. 
				
				\item[3] Per-flow admission time is measured using a request flow set of 100 flows, representing a high admission rate, which helps avoid underestimated results caused by frequent rejections.
				
				\item[4] The baseline for comparison is the state-of-the-art MDM-len method.
			\end{tablenotes}
		\end{threeparttable}
	\end{footnotesize}
\end{table*}

Next, our method is compared with EP, LP, and ABP, which use different local deadline adjustment strategies. For admission capacity, our method demonstrates improvements across all three test cases, with an increase of over 165\% in the Orion CEV case. In terms of first rejection, our method performs similarly to the EP and LP methods in the Automotive System case but shows significant improvements in the large-scale scenarios of Space Launcher and Orion CEV. In the Space Launcher case, the first rejection for EP, LP, and ABP occurs at the 9th flow, while our method postpones it until the 218th flow. In the Orion CEV case, the first rejection for EP, LP, and ABP happens at the 21st flow, while our method postpones it until the 3588th flow. Regarding per-flow admission time, our method shows an increase compared to EP, LP, and ABP, as the bandwidth-balancing local deadline adjustment strategy employed in our method introduces additional computational overhead. Despite the increase, the admission times remain within the same order of magnitude (e.g., tens of microseconds), as our method reduces the solution space through analytical results, which accelerates the solving process. In summary, compared to the EP, LP, and ABP methods that directly adjust local deadlines, our method adjusts local deadlines by balancing the remaining bandwidth. Although this introduces some additional admission time overhead, it significantly improves admission capacity and postpones the occurrence of the first rejection of the flow requests. Particularly in large-scale scenarios, our method significantly increases the number of consecutive admissions, demonstrating enhanced adaptability.

\subsubsection{In-Depth Comparison in the Orion CEV Case}

We conduct an in-depth comparison of different methods in the Orion CEV case, focusing on the distribution of successful admitted flows and bottleneck egress port counts during the admission process. The experiment uses a request flow set comprising 6000 flows, which are divided into 120 flow groups, each containing 50 flows. For each flow group, two metrics are recorded when the admission decision for the last flow in the group is completed: the number of successful admitted flows within the group and the number of bottleneck egress ports in the network. A bottleneck egress port is defined as an egress port where the available residual bandwidth is less than 10\% of $idSl^{\max}$, where $idSl^{\max}$ is the upper bandwidth limit allocated to all AVB classes at the egress port. The comparison results are shown in Fig. \ref{index}, where each subplot corresponds to a different method. In each subplot, the x-axis represents the index of the flow groups requesting admission. The blue points indicate the number of successful admitted flows in the corresponding flow group (ranging from 0 to 50), and the red points represent the number of bottleneck egress ports.

From the comparison results in Fig. \ref{index}, it is observed that our method postpones the appearance of bottleneck egress ports until the 65th flow group. In comparison, MDM-len reaches this point in the 26th group, MDM-rem in the 24th group, EP in the 13th group, LP in the 9th group, and ABP in the 8th group. Further observation of the admitted flows reveals that, after the appearance of bottleneck egress ports, our method, MDM-len, and MDM-rem begin to fail to admit all 50 flows as the egress ports gradually become unavailable. Additionally, as the number of bottleneck egress ports increases, the admission ratio within each flow group shows a decreasing trend. Overall, our method enhances admission capacity and delays the first rejection of flow requests by effectively postponing the occurrence of bottleneck egress ports. The EP, LP, and ABP methods are unable to admit all 50 flows even before bottleneck egress ports appear. This is because the local deadlines determined by their direct deadline adjustment strategies cannot be guaranteed within the available residual bandwidth, leading to the rejection of some flows. In contrast, our method avoids this issue by using a local deadline adjustment strategy that balances the residual bandwidth, thereby postponing the first rejection of flow requests and achieving higher admission capacity.
\begin{figure*}[!t]
	\centerline{\includegraphics[width=\textwidth]{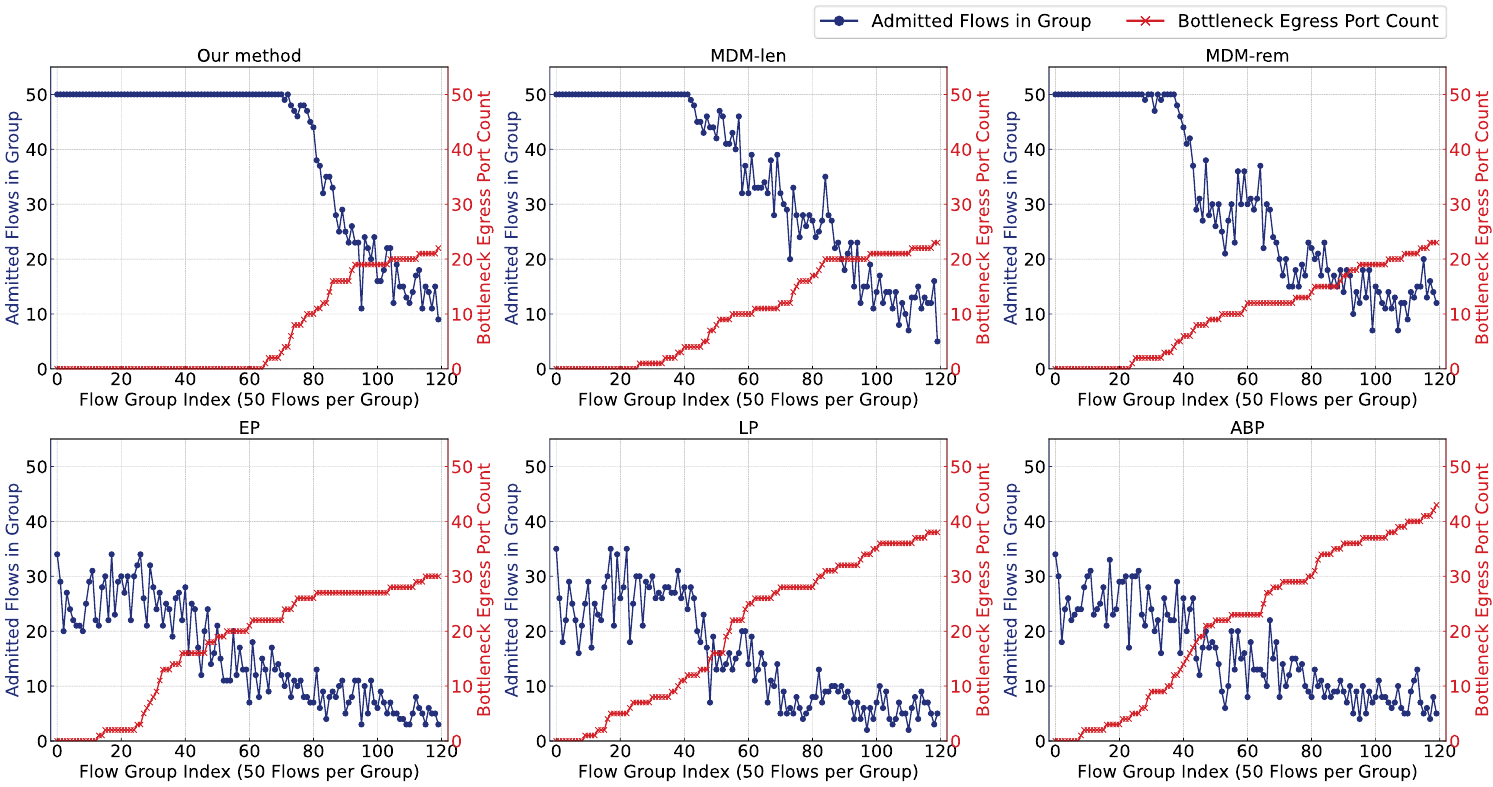}}
	\caption{Comparison of distribution in successful admission flows and bottleneck egress port count in Orion CEV case.}
	\label{index}
\end{figure*}

\section{Conclusion}
\label{Conclusion}
In this work, we propose a rapid, scalable, and high-utilization online admission control method for time-critical ET traffic based on the TSN/ATS+CBS architecture. Experimental results from both synthetic and realistic test cases demonstrate that, compared to the state-of-the-art, our method increases the number of admitted flows by an average of 56\% and reduces admission time by an average of 92\%. Additionally, evaluations show that our method postpones the occurrence of bottleneck egress ports and the first rejection of admission requests, thus improving adaptability.

\bibliographystyle{elsarticle-num}      
\bibliography{online}

\vfill
\end{document}